\documentclass{article}
\usepackage{url}  
\usepackage{graphicx}  
\frenchspacing  
\setlength{\pdfpagewidth}{8.5in}  
\setlength{\pdfpageheight}{11in}  
\pdfinfo{
	/Title Dynamic Connected Cooperative Coverage Problem
	/Author \#1686}
\setcounter{secnumdepth}{0}

\usepackage{soul}
\usepackage[utf8]{inputenc}
\usepackage[usenames, dvipsnames]{xcolor}
\usepackage{amsmath}
\usepackage{tikz}
\usepackage{textcomp}
\usepackage{comment} 
\usepackage{amssymb}
\usepackage{xspace}
\usepackage{todonotes}
\usepackage{url}
\usepackage{subfig}

\newcommand{\tilesub}[6]{
	\draw[fill=#3] (#1,#2) -- (#1,#2+1) -- (#1+0.5, #2+0.5) -- (#1, #2);
	%
	\draw[fill=#4] (#1,#2+1)-- (#1+1,#2+1) -- (#1 +0.5, #2+0.5) -- (#1, #2+1);
	%
	\draw[fill=#5] (#1+1,#2)-- (#1+1,#2+1) -- (#1 +0.5, #2+0.5) -- (#1 +1, #2);
	%
	\draw[fill=#6] (#1,#2)-- (#1 +1,#2) -- (#1 +0.5, #2+0.5) -- (#1, #2);
}

\newcommand{\tile}[6]{
	\tilesub{#1}{#2}{#3}{#4}{#5}{#6}

}

\newcommand{\tikztile}[4]{{\tikz[scale=0.5]{\tilesub{0}{0}{#1}{#2}{#3}{#4}}}}
\newcommand{\minitile}[4]{\tikz[scale=0.3]{\tilesub{0}{0}{#1}{#2}{#3}{#4}}}
\newcommand{\tilered}{red!80}
\newcommand{\tileyellow}{yellow!50}
\newcommand{\tilegreen}{green!70}

\newcommand{\tilewhite}{white}

\tikzstyle{tilingsep} = [lightgray]

\newcommand{\squaretilingtop}[2]{
	\draw[fill=#2, draw=none] (#1, 4.5) rectangle (#1+1, 4);}

\newcommand{\squaretilingtopbis}[2]{
	\draw[fill=#2, draw=none] (#1, 5.5) rectangle (#1+1, 5);}

\newcommand{\squaretilingbottom}[2]{
	\draw[fill=#2, draw=none] (#1, -0.5) rectangle (#1+1, 0);}

\newcommand{\squaretilingleft}[2]{
	\draw[fill=#2, draw=none] (-0.5, #1) rectangle (0, #1+1);}

\newcommand{\squaretilingright}[2]{
	\draw[fill=#2, draw=none] (4.5, #1) rectangle (4, #1+1);}

%
%
%

\usetikzlibrary{decorations.pathmorphing}
\usetikzlibrary{decorations.pathreplacing,angles,quotes}
\usetikzlibrary{arrows.meta}

\newcommand\toptile[1]{top_{#1}}
\newcommand\bottomtile[1]{bot_{#1}}
\newcommand\lefttile[1]{lef_{#1}}

\newcommand\righttile[1]{rig_{#1}}
\newcommand{\communication}{\xspace\scalebox{0.5}{\begin{tikzpicture}\draw[ dashed, line width = 1mm ] (0,0) -- (0.8,0);
		\node at (0,-0.08) {}; \end{tikzpicture}}\xspace}
\makeatletter
\tikzset{
	dot diameter/.store in=\dot@diameter,
	dot diameter=3pt,
	dot spacing/.store in=\dot@spacing,
	dot spacing=10pt,
	dots/.style={
		line width=\dot@diameter,
		line cap=round,
		dash pattern=on 0pt off \dot@spacing
	}
}
\makeatother

\tikzstyle{communication} = [color=blue!50!white,dot diameter=1.5pt, dot spacing=1.5pt, dots]

\newtheorem{example}{Example}
\newtheorem{definition}{Definition}
\newtheorem{proposition}{Proposition}
\newtheorem{theorem}{Theorem}
\newtheorem{fact}{Fact}
\newenvironment{proof}{Proof.}{$\square$}

\newcommand\set[1]{\{#1\}}
\newcommand\suchthat{\mid}

\title{Dynamic Connected Cooperative Coverage Problem}
\author{Tristan Charrier \and François Schwarzentruber \and Eva Soulier \\ {Univ Rennes, CNRS, IRISA}}


\begin{document}

\newcommand{\base}{B}

\maketitle
\begin{abstract}
We study the so-called dynamic coverage problem by agents located in some topological graph. The agents must visit all regions of interest but they also should stay connected to the base via multi-hop. We prove that the algorithmic complexity of this planning problem is PSPACE-complete. Furthermore we prove that the problem becomes NP-complete for bounded plans. We also prove the same complexities for the reachability problem of some positions. We also prove that complexities are maintained for a subclass of topological graphs.
\end{abstract}

\section{Introduction}
Unmanned autonomous vehicles (UAVs) are nowadays used in many applications (controlling wildlife, surveying dangerous areas, measuring pollution, etc.).
For example, if a fire occurs, firefighters would send a fleet of UAVs from a base to measure pollution. The UAVs would have then to collaborate so they could map the entire area and always keep communication with the base.

As in \cite{DBLP:conf/icc/Yanmaz12}, \cite{teacy2010maintaining} and \cite{IJCAI2018demodrones}, we consider a geographical area, with a launch base  and regions of interest to visit, and topological communication constraints. The big challenge is to synthesize a cooperative plan for the fleet of UAVs for visiting all the regions of interest at least once, always keeping communication with the base and coming back to the base at the end. Communication may be multi-hop (a UAV may communicate to the base via intermediate UAVs). The communication aspect is important in applications such as search-and-rescue.

In this paper, we formally define and study an abstract version of that problem we call it the \emph{dynamic connected cooperative coverage problem}. A geographical area is modeled by a finite graph. The finite graph could be generated from triangulation of the continuous environment (see in Fig. 2, p. 2022  \cite{DBLP:conf/icra/FainekosKP05}, and Fig. 3 of \cite{doi:10.1146/annurev-control-060117-104838}).

 Figure \ref{figure:exampleexecution} shows an execution of an 11-step plan in such a graph: elementary possible moves are represented by solid lines; a dashed line between two regions means that  communication is possible between them.  At the first and final step, UAVs are at the base. At the end, all regions must have been visited. During the execution, UAVs cooperate to stay connected to the base (in other words, dashed lines forming a connected subgraph).

UAVs alternate between moving and performing tasks at all the nodes of the graph that require the UAVs to be stationary 
 (taking high-quality photos, manipulating some objects, etc.). Humans (firemen, engineers, etc.) supervise the UAV mission at the base. Thus, it is required the UAVs to communicate huge amount of data to the base while they are performing tasks. That is why high-speed broadband communications is needed. Such technologies (e.g. laser) typically do not pass through buildings and therefore communication constraints are not trivial to handle. Note that UAVs do not need to communicate huge data to the base while they are moving.

The difficulty resides in the combinatorial when UAVs cooperate to keep communication with the base all along the plan. The plans of the UAVs are inter-dependent.  Even if many UAVs have an ``automated back to launch location'' option, the planning must include the path to come back to launch location 
as we consider using the system in urban areas, where the UAVs would have to avoid any obstacle on their way.

Our problem can be seen as a variation of Multi-Agent Path Finding (MAPF). MAPF consists in finding plans of elementary moves of robots in a grid, starting from an initial situation where each robot has a designated initial cell to a final situation in which each robot has a designated goal cell. The robots should not collide. Finding an optimal plan in the context of MAPF has been proven to be NP-hard (\cite{DBLP:conf/aaai/YuL13}, \cite{DBLP:conf/aaai/MaTSKK16}). MAPF and the dynamic connected cooperative coverage problem differ mainly by their target applications, mainly warehouse or storage robots for the former \cite{DBLP:conf/aaai/WurmanDM07}, search-and-rescue for the latter. Communication and connectivity of UAVs is a main ingredient to our problem compared to MAPF, and, as we will show, it makes our problem computationally more difficult, especially since we do not focus on finding optimal plans - not an optimisation problem - but just finding plans - the existence of a plan.

\begin{figure}[t]
	\begin{center}
		\newcommand{\traitlegendelongueur}{0.3}
		\input{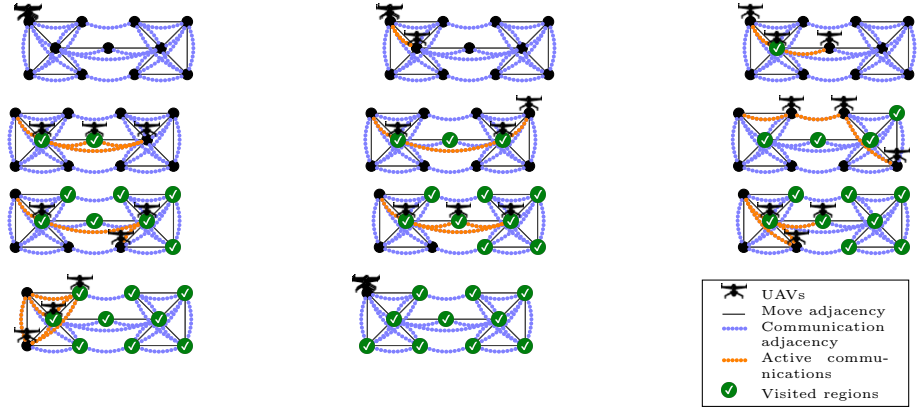}
		\raisebox{-6mm}{
				\tikz{
					\node[inner sep=0.5mm, draw] {\tiny
			\begin{tabular}{p{1mm}p{17mm}}
			\imgDrone & UAVs \\
			\tikz{\draw (0, 0) edge (\traitlegendelongueur, 0);} & Move adjacency \\
			\tikz{\draw[communication] (0, 0) edge (\traitlegendelongueur, 0);} & Communication adjacency \\
			\tikz{\draw[communication, orange] (0, 0) edge (\traitlegendelongueur, 0);} & Active communications \\
			\ok & Visited regions
		\end{tabular}};}
	}
	\end{center}
	\caption{Example of a mission execution.\label{figure:exampleexecution}}
\end{figure}

%

%
%
%
In this paper, we provide theoretical complexity results: we prove that the dynamic connected cooperative coverage problem is PSPACE-complete and that its bounded version is NP-complete. Upper bounds are trivial but lower bounds are proven by delightful reductions from tiling problems \cite{Boas97theconvenience}. It means that synthesizing plans for the dynamic connected cooperative coverage problem is as difficult as classical planning \cite{DBLP:journals/ai/Bylander94}. We also prove that the reachability problem (reaching specific nodes in the graph) has the same complexities. We also prove that the lower bounds are the same even when restricting to a subclass of topological graph for which it is always possible to communicate between two nodes $v$ and $v'$ for which it is possible to move in one step from $v$ to $v'$ (the class of neighbor-communicable topologic graphs).


 The rest of the paper is organized as follows. First we settle the definition. Second, we  provide the theoretical complexity upper bounds. Third, we recall basics about tiling problems and then provide the theoretical complexity lower bounds, we obtained via reduction from tiling problems. Finally we detail related work.

\section{Definitions}
\label{section:definition}
\subsection{Topologic graph}
\label{subsection:topologicalgraph}
A geographical area is modeled by a \emph{topologic graph}. Nodes are regions of interest where the launch base is  a special region noted $\base$. Relation $\rightarrow$ represents possible moves of UAVs: $v \rightarrow v'$ if a UAV can reach $v'$ from $v$ in one step. Relation $\communication$ represents possible communications: $v \communication v'$ if any UAV at $v$ can communicate with any UAV at $v'$. We say that $v$ \emph{communicates} with $v'$. Formally:

\begin{definition}[Topological graph]
\label{definition:topologicgraph}
A \emph{topologic graph} is a tuple $G$=($V$,$\rightarrow$,$\communication$) where $V$ is a non-empty finite set of regions containing a specific element $\base$ and $\rightarrow,\communication \, \subseteq V \times V$ is such that $(V, \communication)$ is a non-oriented graph, $\base \rightarrow \base$.
\end{definition}

 Definition \ref{definition:topologicgraph} imposes $\base$ to be $\rightarrow$-reflexive since a UAV can stay at the base. Notice that it does not impose the relation $\rightarrow$ to be symmetric or other nodes to be reflexive so that we can capture windy environments, one-sided roads etc.

For sufficiently fine-grained topologic graphs, there are no obstacles between regions $v$ and $v'$ when $v \rightarrow v'$. Thus, communications between $v$ and $v'$ are not perturbed and $v \communication v'$. In words, if one UAV can reach $v'$ from $v$ in one step then a communication between two UAVs in $v$ and $v'$ is possible. This hypothesis seems reasonable for many means of communication (lasers, etc.). That is why, we define the subclass of \emph{neighbor-communicable} topologic graphs in which 
 $v \rightarrow v'$ implies $v \communication v'$. 

\subsection{Executions}


Given a topologic graph and given $n$ UAVs, a configuration $c$ gives positions to each UAV such that they form a multi-hop system: they are all connected to the base. Furthermore, we suppose that at most one UAV is at a given region, except at~$B$.
\newcommand{\setofregionsUAVs}{V_{\text{\tiny UAVs}}}
\begin{definition}[Configuration]
	\label{definition:configuration}
A \emph{configuration} $c$ is an element of $V^n$ such that the graph $(\setofregionsUAVs, \communication \cap \setofregionsUAVs \times \setofregionsUAVs)$ is connected with $\setofregionsUAVs =\{c_i ~/~ i \in \{1,..n\}\} \cup \{B\}$ and for all $i \neq j$, if $c_i \neq B$, then  $c_i \neq c_j$. We note $c \rightarrow c'$ when $c_i \rightarrow c_i' \in G$ for all $i \in \{1,..n\}$.
\end{definition}

Without loss of generality, as the UAVs are interchangeable, configurations are equivalent up to a permutation of UAVs. To avoid cumbersome notations in proofs, we consider equivalent configurations as equal. For instance, for $n = 5$, configurations $(B, B, v, v', v'')$ and $(B, v, v'', B, v')$ are equivalent.

\begin{definition}[Execution] An \emph{execution} in $G$ with $n$ UAVs of length $\ell$ is a sequence of configurations $(c^0,...,c^\ell)$ such that $c^0 \rightarrow c^1\rightarrow ... \rightarrow c^\ell$. A \emph{covering execution} in $G$ with $n$ UAVs of length $\ell$ is an execution $(c^0,...,c^\ell)$ such that $c^0 = c^\ell = (B,B,...,B)$ (all UAVs are at B at the start and at the end) and $\{c_i^t ~/~ t \in \{0,..\ell\}, ~ i \in \{1,..n\}\}=V$ (all regions are visited at some point).
\end{definition}

Notice that if we have $c_1 \rightarrow c_1'$ and $c_1$ and $c_2$ are equivalent then by taking $c_2'$ the same permutation of $c_1'$ compared to $c_1$ and $c_2$, then $c_2 \rightarrow c_2'$. Thus, it is always possible to transform an execution into an equivalent one by a permutation. 
 \begin{example}
 	Figure \ref{figure:exampleexecution} shows a topologic graph with 11 regions. Here, the $\rightarrow$-relation is symmetric and is represented by solid black lines. The $\communication$-relation is represented by dotted lines and is blue when not taken, orange when taken. Visited regions are represented by checked marks. The execution is read line by line. Notice that at each step, the subgraph $(\setofregionsUAVs, \communication \cap \setofregionsUAVs \times \setofregionsUAVs)$ is connected; it is the subgraph obtained by taking only the active communication lines in Figure \ref{figure:exampleexecution}.  Although the topological graph has 11 nodes, it is sufficient to have 3 UAVs to map the topological graph.
 \end{example}


\newcommand{\problemcoverage}{\textbf{Coverage}\xspace}
\newcommand{\problemreachability}{\textbf{Reachability}\xspace}
\newcommand{\problemboundedcoverage}{\textbf{bCoverage}\xspace}
\newcommand{\problemboundedreachability}{\textbf{bReachability}\xspace}

\subsection{Decision problems}

We define the connected cooperative coverage problem and the connected cooperative reachability problem shortly denoted by \problemcoverage and \problemreachability. The reachability problem essentially is introduced for pedagogical reasons, especially for making the lower bound results more diligent. We also define bounded versions of the two decision problems, namely \problemboundedcoverage and \problemboundedreachability. The bounded versions are inspired from the so-called polynomial-length planning problem \cite{DBLP:conf/jelia/Turner02} in which we ask for the existence of a plan of length bounded by a polynomial in the size of the planning task. It is equivalent to add the length bound as an input to the decision problems in unary. Formally:

\newcommand{\definitiondecisionproblem}[2]{\vspace{-2mm}\begin{itemize}
			\item Input: #1;
			\item Output: yes if #2; no otherwise.
	\end{itemize}}

\begin{definition}[Coverage problems] ~

	\problemcoverage: \definitiondecisionproblem{a topologic graph $G$ and $n \in \mathbb{N}$}{there is a covering execution in $G$ with $n$ UAVs}
	
	\problemboundedcoverage: \definitiondecisionproblem{a topologic graph $G$, $n \in \mathbb{N}$ and $\ell \in \mathbb{N}$ in unary}{there is a covering execution in $G$ with $n$ UAVs of length at most $\ell$}
\end{definition}


\begin{definition}[Reachability problems] ~

\problemreachability:  \definitiondecisionproblem{a topologic graph $G$ and a configuration $c$}{there is an execution $(c^0, \dots, c^\ell)$ in $G$ such that $c^0 = (B, \dots, B)$ and $c^\ell = c$}


\problemboundedreachability: \definitiondecisionproblem{a topologic graph $G$, a configuration $c$ and $\ell \in \mathbb{N}$ in unary}{there is an execution $(c^0, \dots, c^{\ell'})$ in $G$ such that $c^0 = (B, \dots, B)$, $c^{\ell'} = c$ and $\ell' \leq \ell$}

\end{definition}

We now establish upper bound complexities of \problemcoverage, \problemboundedcoverage, \problemreachability and \problemboundedreachability.

\section{Complexity: upper bounds}
\label{section:complexityupperbounds}
\begin{proposition}
	\problemcoverage and \problemreachability are in PSPACE.\label{proposition:problemcoveragePSPACE}
	\end{proposition}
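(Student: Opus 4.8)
The plan is to observe that both \problemreachability and \problemcoverage are reachability questions in a configuration graph that, although of exponential size, is implicitly described by polynomially-sized objects, and to exploit Savitch's theorem ($\mathrm{NPSPACE} = \mathrm{PSPACE}$). A configuration is a tuple in $V^n$, so it is stored in $n \lceil \log_2 |V| \rceil$ bits, polynomial in the input size. Given a candidate tuple $c \in V^n$, one checks in polynomial time whether it is a valid configuration: scan for a repeated non-$\base$ region, then test connectivity of $(\setofregionsUAVs, \communication \cap \setofregionsUAVs \times \setofregionsUAVs)$ by a graph traversal. Given two configurations $c, c'$, one checks $c \rightarrow c'$ in polynomial time by verifying $c_i \rightarrow c_i'$ for each $i$. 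It therefore suffices to exhibit nondeterministic polynomial-space procedures.

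For \problemreachability I would guess the execution one configuration at a time. The procedure keeps only the current configuration and a step counter; it initializes the current configuration to $(\base, \dots, \base)$, then repeatedly guesses a successor $c'$, checks that $c'$ is a valid configuration and that $c \rightarrow c'$, and overwrites the current configuration with $c'$. It accepts as soon as the current configuration equals the target $c$. Since there are at most $|V|^n$ configurations, a reachable target is reachable by a path repeating no configuration, hence within $|V|^n$ steps; a counter bounded by $|V|^n$ needs only $n \lceil \log_2 |V| \rceil$ bits, so the procedure runs in polynomial space.

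For \problemcoverage the only extra information needed is which regions have already been visited, so I would augment the state with a bit vector $S \subseteq V$ of length $|V|$. Starting from $(\base, \dots, \base)$ with $S$ recording $\base$, the procedure guesses successors exactly as above and, after each step, adds the currently occupied regions to $S$; it accepts when the current configuration is $(\base, \dots, \base)$ and $S = V$. Here the relevant state space consists of pairs (configuration, visited set), of which there are at most $|V|^n \cdot 2^{|V|}$, so a covering execution, if one exists, can be taken not to repeat a pair and thus to have length at most $|V|^n \cdot 2^{|V|}$; a counter bounded by this quantity uses $O(n \log |V| + |V|)$ bits, again polynomial.

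The arguments are routine; the only point deserving care is the bound on the counter, which must be large enough to guarantee completeness (no reachable or coverable target is missed) yet small enough to fit in polynomial space — this is precisely what the two state-space estimates provide, the coverage case differing by the additional $2^{|V|}$ factor coming from the visited set. Applying Savitch's theorem converts these nondeterministic polynomial-space procedures into deterministic ones, placing both problems in PSPACE.
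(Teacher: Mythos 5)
Your proposal is correct and takes essentially the same route as the paper: nondeterministically guess the execution one configuration at a time, keeping in memory only the current configuration (plus the set of already visited regions for \problemcoverage), and conclude by Savitch's theorem. The explicit step counters and the bounds $|V|^n$ and $|V|^n \cdot 2^{|V|}$ you supply simply make precise the termination and completeness details that the paper's shorter proof leaves implicit.
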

	
	\begin{proof}
		In both cases, the straightforward non-deterministic guessing an execution runs in polynomial space: for \problemcoverage,  we only keep in memory the last configuration and the set of already visited regions. For \problemreachability, we only keep in memory the last configuration. By Savitch's theorem (NPSPACE = PSPACE) \cite{DBLP:journals/jcss/Savitch70}, the proposition is proven. 
	\end{proof}
	
\begin{proposition}
	\problemboundedcoverage and \problemboundedreachability are in NP.
\end{proposition}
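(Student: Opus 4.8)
The plan is to give the standard guess-and-check argument for NP membership: a ``yes'' instance should admit a polynomially-sized certificate that can be verified in polynomial time, and here the natural certificate is the execution itself. For \problemboundedreachability I would non-deterministically guess a sequence of configurations $(c^0,\dots,c^{\ell'})$ with $\ell' \le \ell$, and for \problemboundedcoverage a sequence $(c^0,\dots,c^{\ell'})$ with $\ell' \le \ell$ as well; the verifier then checks the defining conditions of a (covering) execution directly against the input.

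The key point, and the reason the bounded versions drop from PSPACE down to NP, is that $\ell$ is given \emph{in unary}, so $\ell$ is bounded by the size of the input. Consequently the guessed sequence has at most $\ell + 1$ configurations, i.e. polynomially many. First I would argue that a single configuration admits a representation polynomial in the input: since at most one UAV occupies each region other than $\base$, a configuration is determined by the set of occupied non-base regions together with the number of UAVs sitting at $\base$, costing $O(|V| + \log \nbdrones)$ bits. (When $\nbdrones$ is small one may instead guess the full tuple in $V^{\nbdrones}$; the compact encoding is what keeps the certificate polynomial in general.) Hence the whole certificate has polynomial size.

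Next I would verify the certificate in polynomial time. For each guessed $c^t$ I check that it is a genuine configuration in the sense of Definition \ref{definition:configuration}: that the induced graph $(\setofregionsUAVs, \communication \cap \setofregionsUAVs \times \setofregionsUAVs)$ is connected (one graph traversal) and that no non-base region hosts two UAVs. For consecutive configurations I check the transition $c^t \rightarrow c^{t+1}$. Finally I check the boundary and goal conditions: $c^0 = (\base,\dots,\base)$ in both problems, $c^{\ell'} = c$ for \problemboundedreachability, and for \problemboundedcoverage the condition $c^{\ell'} = (\base,\dots,\base)$ together with coverage $\bigcup_{t,i}\{c^t_i\}=V$. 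Each check is clearly polynomial, so the total verification runs in polynomial time.

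The one step needing care — and the main obstacle — is verifying the transition $c^t \rightarrow c^{t+1}$ under the compact, permutation-invariant encoding of configurations. With the full tuple representation this is just the componentwise test $c_i^t \rightarrow c_i^{t+1}$, but with the multiset encoding I must certify the existence of \emph{some} assignment of the UAVs of $c^t$ onto those of $c^{t+1}$ respecting $\rightarrow$ UAV by UAV. I would phrase this as a feasibility question for a bipartite matching (equivalently an integral flow) between the occupancy counts at each region, with an edge from $v$ to $v'$ precisely when $v \rightarrow v'$; this is solvable in polynomial time, and the remark following Definition \ref{definition:configuration} guarantees that any such assignment realises an equivalent valid execution. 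Once this is in place, membership in NP follows immediately.
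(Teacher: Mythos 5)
Your proof is correct, and it reaches the same destination by a mildly different route. The paper simply reuses the nondeterministic algorithm from Proposition~\ref{proposition:problemcoveragePSPACE} --- guess the execution step by step, keeping in memory only the current configuration (plus, for \problemboundedcoverage, the set of already visited regions) --- and aborts once the length bound $\ell$ is exceeded; since $\ell$ is given in unary, that machine runs in nondeterministic polynomial time. You instead package the entire execution as a certificate and verify it, which is the verifier formulation of the same idea, and both arguments hinge on exactly the observation you emphasize: unary $\ell$ makes the execution polynomially long. What is genuinely different is your treatment of the encoding of configurations. The paper tacitly works with configurations as tuples in $V^\nbdrones$, which is only polynomial when $\nbdrones$ is bounded by the input size, whereas your compact encoding (the set of occupied non-base regions plus a counter at $\base$, costing $O(|V| + \log \nbdrones)$ bits) together with the flow/matching check for the transition relation makes the argument robust even when $\nbdrones$ is written in binary and exceeds $|V|$. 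This costs you an extra lemma --- that transition feasibility between multiset-encoded configurations reduces to an integral flow problem, which is correct as stated and sound by the permutation-invariance remark after Definition~\ref{definition:configuration} --- and it buys a proof that does not silently assume the tuple representation is small. With the plain tuple representation your argument collapses to the paper's componentwise check, so nothing is missing on either side.
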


	\begin{proof}
		We define the same algorithms given in the Proof of Proposition~\ref{proposition:problemcoveragePSPACE} except that we stop the execution when the length is exceeded. Thus, the algorithms are non-deterministic and run in polynomial time.
	\end{proof}

\section{Tiling problems}
\label{section:tilings}
\newcommand{\constrainthorizontally}{(h)\xspace}
\newcommand{\constraintvertically}{(v)\xspace}

Tilings were introduced by Wang (\cite{Wang1961,Wang1990}).
As pointed out by van der Boas (\cite{van1984boundedtiling}, \cite{Boas97theconvenience}), tilings offer convenient decision problems for proving lower bound complexity. We also cite Levin's work who invented the notion of NP-completeness independently from Cook and who introduced a bounded tiling problem \cite{levin1973}. Some tiling problems are also addressed in some textbooks to characterize some complexity classes (\cite{DBLP:books/daglib/0096996}, p. 262, 310; \cite{Harel:2000:DL:557365}, p. 58-63).
 We use \emph{tile types} $t$ that are tuples $\langle left(t), up(t), right(t), down(t)\rangle \in \mathbb{N}^4$ giving colors (represented by integers) to the four sides of a tile \minitile{\tilewhite}{\tilewhite}{\tilewhite}{\tilewhite}. A tiling is represented by a function $\lambda$ that maps a tile type to each position $(i, j)$. Two horizontally or vertically adjacent tiles should match horizontally (constraints \constrainthorizontally and \constraintvertically in the following Definitions).  The two decision problems introduced in this section are taken from \cite{Boas97theconvenience}.

 \subsection{Square tiling problem}
 
 \newcommand{\drawsquaretilingproblemexamplecadre}{
 	\draw (0, 0) rectangle (4, 4);
 	
 	\squaretilingtop 0 {\tilered}
 	\squaretilingtop 1 {\tileyellow}
 	\squaretilingtop 2 {\tilered}
 	\squaretilingtop 3 {\tilegreen}
 	
 	\squaretilingbottom 0 {\tilered}
 	\squaretilingbottom 1 {\tileyellow}
 	\squaretilingbottom 2 {\tilered}
 	\squaretilingbottom 3 {\tilegreen}
 	
 	\squaretilingleft 0 {\tilered}
 	\squaretilingleft 1 {\tileyellow}
 	\squaretilingleft 2 {\tilered}
 	\squaretilingleft 3 {\tilegreen}
 	
 	\squaretilingright 0 {\tilered}
 	\squaretilingright 1 {\tileyellow}
 	\squaretilingright 2 {\tilered}
 	\squaretilingright 3 {\tilegreen}

 	\foreach \x in {1, 2, 3} {
 		\draw[tilingsep] (\x, 0) -- (\x, 4);
 		\draw[tilingsep] (0, \x) -- (4, \x);
 	}
 	}

 \newcommand{\drawcorridortilingproblemexamplecadre}{
 	\draw (0, 0) rectangle (3, 4);
 	
 	\squaretilingtop 0 {\tileyellow}
 	\squaretilingtop 1 {\tilegreen}
 	\squaretilingtop 2 {\tilered}

 	\squaretilingbottom 0 {\tilered}
 	\squaretilingbottom 1 {\tileyellow}
 	\squaretilingbottom 2 {\tilegreen}

 }
   \newcommand{\drawcorridortilingproblemexamplecadrebis}{
   	\draw (0, 0) rectangle (3, 5);
   	
 	\squaretilingtopbis 0 {\tileyellow}
 	\squaretilingtopbis 1 {\tilegreen}
 	\squaretilingtopbis 2 {\tilered}

 	\squaretilingbottom 0 {\tilered}
 	\squaretilingbottom 1 {\tileyellow}
 	\squaretilingbottom 2 {\tilegreen}

   	\foreach \x in {1, 2, 3, 4} {
   		\draw[tilingsep] (0, \x) -- (3, \x);
   	}
   	\foreach \x in {1, 2} {   	
   	   		\draw[tilingsep] (\x, 0) -- (\x, 4);
   	   	}
   }
 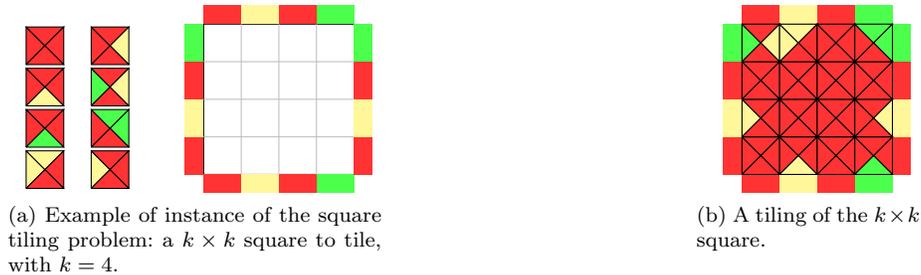
\begin{figure}[t]
 	\centering
 	
 	\subfloat[Example of instance of the square tiling problem: a $k \times k$ square to tile, with $k = 4$.]
 	{\label{figure:squaretilingproblem}
 		\raisebox{8mm}{
 		\begin{minipage}{20mm}
 			\tikztile\tilered\tilered\tilered\tilered
 ~~
 			\tikztile\tilered\tilered\tileyellow\tilered
 			 			
 			\tikztile\tilered\tilered\tilered\tileyellow
 	~~		
 			\tikztile\tilegreen\tilered\tileyellow\tilered
 			
 			\tikztile\tilered\tilered\tilered\tilegreen
 	~~		
 			\tikztile\tilered\tilegreen\tilegreen\tilered
 			
 			\tikztile\tileyellow\tileyellow\tilered\tilered
 	~~		
 			\tikztile\tileyellow\tilered\tilered\tilered
 		\end{minipage}}
 		\begin{tikzpicture}[scale=0.5, baseline=0mm]
 		\drawsquaretilingproblemexamplecadre
  		\end{tikzpicture}
 	}
 	\hfill
 	\subfloat[A tiling of the $k \times k$ square.]
 	{\label{figure:squaretilingproblemsolution}
 		{
 		\raisebox{-2.5mm}{
 		\begin{tikzpicture}[scale=0.5]
 		\drawsquaretilingproblemexamplecadre
 		
 		\tile00 \tilered\tilered\tilered\tilered
 		\tile10 \tilered\tilered\tilered\tileyellow
 		\tile20 \tilered\tilered\tilered\tilered
 		\tile30 \tilered\tilered\tilered\tilegreen
 		
 		\tile01 \tileyellow\tilered\tilered\tilered
 		\tile11 \tilered\tilered\tilered\tilered
 		\tile21 \tilered\tilered\tilered\tilered
 		\tile31 \tilered\tilered\tileyellow\tilered
 		
 		\tile02 \tilered\tilered\tilered\tilered
 		\tile12 \tilered\tilered\tilered\tilered
 		\tile22 \tilered\tilered\tilered\tilered
 		\tile32 \tilered\tilered\tilered\tilered
 		
 		\tile03 \tilegreen\tilered\tileyellow\tilered
 		\tile13 \tileyellow\tileyellow\tilered\tilered
 		\tile23 \tilered\tilered\tilered\tilered
 		\tile33 \tilered\tilegreen\tilegreen\tilered
 		\end{tikzpicture}}}
 	}
 	\caption{The square tiling problem.}
 \end{figure}
 
 The square tiling problem consists in tiling a $k \times k$ square as depicted Figure~\ref{figure:squaretilingproblem}, by using finite set of tile types and by respecting boundary color constraints along the edges. Figure~\ref{figure:squaretilingproblemsolution} shows such a tiling. Note that tiles cannot be turned. Formally:
 
 \newcommand{\colorsequencetop}[1]{top_{#1}}
 \newcommand{\colorsequencebottom}[1]{bot_{#1}} 
 \newcommand{\colorsequenceleft}[1]{left_{#1}}
\newcommand{\colorsequenceright}[1]{right_{#1}}

 \begin{definition}[Square tiling problem]
 	 The square tiling problem is the following decision problem:
 	 \begin{itemize}
 	 	\item Input: a set $T \subseteq \mathbb{N}^4$  of tiles types and four sequences $\colorsequencetop1, \dots, \colorsequencetop k \in T$, $\colorsequencebottom1, \dots, \colorsequencebottom k \in T$, $\colorsequenceleft 1, \dots, \colorsequenceleft k \in T$, $\colorsequenceright 1, \dots, \colorsequenceright k \in T$ of length $k$
 	 \item Output :yes if there is a function $\lambda: \{1,..,k\}\times\{1,..k\}\rightarrow T$ such that:
 	 	\begin{enumerate}
 	 		\item[\constrainthorizontally] $right(\lambda(i,j)) = left(\lambda(i+1,j))$ for all $i \in \{1,..k-1\}$, for all $j \in \{1,..m\}$;
 	 		\item[\constraintvertically] $up(\lambda(i,j)) = down(\lambda(i,j+1))$ for all $i \in \{1,..k\}$, for all $j \in \{1,..m-1\}$;
 	 		\item $up(\lambda(1,j)) = \colorsequencetop j$ for all $j \in \{1,..k\}$;
 	 		\item $down(\lambda(k, j))  = \colorsequencebottom j$ for all $j \in \{1,..k\}$;
 	 		\item $left(\lambda(i,1)) = \colorsequenceleft i$ for all $i \in \{1,..k\}$;
 	 		\item $right(\lambda(i,k)) = \colorsequenceright i$  for all $i \in \{1,..k\}$.
 	 	\end{enumerate}
 	 	no otherwise
 	 	\end{itemize}
 \end{definition}

  \begin{theorem}\label{theorem:squaretilingproblemPSPACEcomplete}
 	The square tiling problem is NP-complete \cite{Boas97theconvenience,van1984boundedtiling}. 
 \end{theorem}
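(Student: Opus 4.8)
The plan is to prove NP-completeness by establishing membership in NP and NP-hardness separately. For membership, I would note that a tiling $\lambda$ is itself a polynomial-size certificate: it assigns one tile type from $T$ to each of the $k^2$ positions, so the whole function can be written down in space polynomial in the input size. Verifying the six constraints — the two internal matching conditions \constrainthorizontally and \constraintvertically, and the four boundary conditions against the sequences $\colorsequencetop{}$, $\colorsequencebottom{}$, $\colorsequenceleft{}$, $\colorsequenceright{}$ — requires only checking $O(k^2)$ equalities of integer colors, which is clearly polynomial time. Hence the problem is in NP.

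For NP-hardness, the plan is to reduce from a known NP-complete problem. The natural source is a Turing-machine-based reduction: given a nondeterministic Turing machine $M$ that runs in time $p(|x|)$ on input $x$, I would encode an accepting computation of $M$ into a tiling of a square whose side length $k$ is polynomial in $p(|x|)$. The standard idea is to let row $j$ of the square represent the machine configuration at time step $j$, with each tile encoding one tape cell together with whether the head is present and in which state. The horizontal matching constraints \constrainthorizontally propagate the tape contents correctly within a row, while the vertical matching constraints \constraintvertically enforce that consecutive rows are related by a legal transition of $M$ (local window consistency). The boundary sequences encode the initial configuration on the bottom (or top) edge — the input $x$ written on the tape with the head in the start state — and force an accepting state to appear, while the left and right boundaries enforce blank tape and correct framing. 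Then a valid tiling exists if and only if $M$ has an accepting computation on $x$.

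The main obstacle, and the part requiring the most care, is designing the tile types so that the vertical constraint faithfully simulates \emph{one} step of the transition relation: a single tile's vertical color pair cannot by itself see the head in a neighboring cell, so the encoding must carry enough information across horizontal adjacencies (for instance, passing a ``head is to my left/right'' signal through the $right$/$left$ colors) so that the combination of horizontal and vertical constraints collectively enforces a legal transition at every position. Getting the head-movement bookkeeping exactly right — ensuring no spurious heads are created or destroyed and that the nondeterministic choice is localized correctly — is the delicate combinatorial core of the construction.

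Since the statement cites \cite{Boas97theconvenience,van1984boundedtiling}, the intent is presumably to invoke this as a known result rather than to reproduce the full Turing-machine simulation here; I would therefore state the membership argument explicitly and cite the referenced works for the hardness direction, noting that the reduction follows the classical tableau encoding of a bounded nondeterministic computation into a square tiling. The cleanest exposition simply remarks that the $k \times k$ square corresponds to a $p(|x|) \times p(|x|)$ computation tableau, so that the polynomial time bound of $M$ matches the polynomial side length of the square, which is exactly what distinguishes this bounded (NP) version from the unbounded (PSPACE/undecidable) tiling variants.
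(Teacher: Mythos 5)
The paper offers no proof of this theorem at all: it is stated as a known result with the citations to van Emde Boas (and the bounded tiling problem of Levin), which is precisely how you propose to handle the hardness direction. Your explicit NP-membership certificate argument and your sketch of the classical computation-tableau reduction (rows as configurations, head signals passed through horizontal colors, boundaries fixing the initial and accepting configurations) are the standard proof found in the cited sources, so your proposal is correct and takes essentially the same route the paper intends.
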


\subsection{Corridor tiling problem}
 
Contrary to the square tiling problem, the corridor tiling problem consists in tiling a $k \times m$-rectangle, where $m$ is arbitrary, by respecting the top and bottom edge constraints, left and right edges being all white. Formally:
 
%
\begin{definition}[Corridor tiling problem] The corridor tiling problem is the following decision problem:
\begin{itemize}
 \item input: A set $T \subseteq \mathbb{N}^4$  of tiles types and two sequences $\toptile{1},...,\toptile{k}$ and $\bottomtile{1},...,\bottomtile{k}  \in T$ of length $k$;
 \item output: yes if there exists  an integer $m$ and a function $\lambda: \{1,..,m\}\times\{1,..k\}\rightarrow T$ such that:
 \begin{enumerate}
\item[\constrainthorizontally] $right(\lambda(i,j)) = left(\lambda(i+1,j))$ for all $i \in \{1,..k-1\}$, for all $j \in \{1,..m\}$;
\item[\constraintvertically] $up(\lambda(i,j)) = down(\lambda(i,j+1))$ for all $i \in \{1,..k\}$, for all $j \in \{1,..m-1\}$;
\item $\lambda(1,j) =\bottomtile j$ for all $j \in \{1,..k\}$;
\item $\lambda(m,j) = \toptile j$ for all $j \in \{1,..k\}$;
\item $left(\lambda(m,1)) = right(\lambda(m, k)) = white$.
 \end{enumerate}
\end{itemize}
\end{definition}

 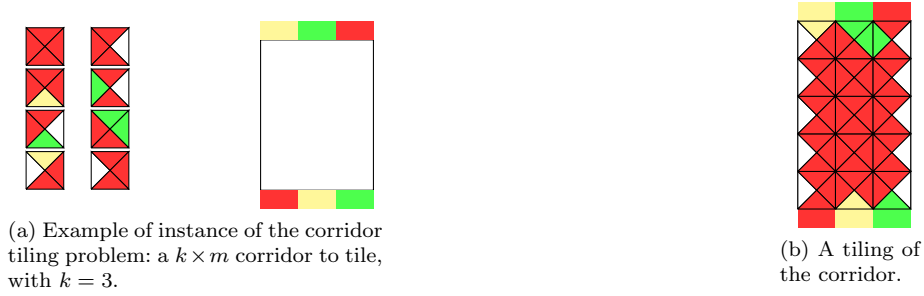
\begin{figure}[t]
 	
 	 	\centering
 	 	
 	 	\subfloat[Example of instance of the corridor tiling problem: a $k \times m$ corridor to tile, with $k = 3$.]
 	 	{\label{figure:squaretilingproblem}
 	 		\raisebox{1cm}{
 	 			\begin{minipage}{3cm}
 	 				\tikztile\tilered\tilered\tilered\tilered
 	 				~~
 	 				\tikztile\tilered\tilered\tilewhite\tilered
 	 				
 	 				\tikztile\tilered\tilered\tilered\tileyellow
 	 				~~		
 	 				\tikztile\tilegreen\tilered\tilewhite\tilered
 	 				
 	 				\tikztile\tilered\tilered\tilewhite\tilegreen
 	 				~~		
 	 				\tikztile\tilered\tilegreen\tilegreen\tilered
 	 				
 	 				\tikztile\tilewhite\tileyellow\tilered\tilered
 	 				~~		
 	 				\tikztile\tilewhite\tilered\tilered\tilered
 	 			\end{minipage}}
 	 			\begin{tikzpicture}[scale=0.5, baseline=0mm]
 	 			\drawcorridortilingproblemexamplecadre
 	 			\end{tikzpicture}
 	 		}
 	 		\hfill
 	 		\subfloat[A tiling of the corridor.]
 	 		{\label{figure:squaretilingproblemsolution}
 	 			{
 	 				\raisebox{-0.5cm}{\begin{tikzpicture}[scale=0.5]
 	 				\drawcorridortilingproblemexamplecadrebis
 	 				
 	 				\tile00 \tilewhite\tilered\tilered\tilered
 	 				\tile10 \tilered\tilered\tilered\tileyellow
 	 				\tile20 \tilered\tilered\tilewhite\tilegreen
 	 				
 	 				\tile01 \tilewhite\tilered\tilered\tilered
 	 				\tile11 \tilered\tilered\tilered\tilered
 	 				\tile21  \tilered\tilered\tilewhite\tilered
 	 				
 	 				\tile02 \tilewhite\tilered\tilered\tilered
 	 				\tile12 \tilered\tilered\tilered\tilered
 	 				\tile22 \tilered\tilered\tilewhite\tilered
				
 	 				\tile03 \tilewhite\tilered\tilered\tilered
 	 				\tile13 \tilered\tilered\tilered\tilered
 	 				\tile23 \tilered\tilered\tilewhite\tilered

 	 				\tile04 \tilewhite\tileyellow\tilered\tilered 
 	 				\tile14\tilered\tilegreen\tilegreen\tilered
 	 				\tile24 \tilegreen\tilered\tilewhite\tilered

 	 				\end{tikzpicture}}}
 	 		}
 	\caption{The corridor tiling problem.\label{figure:coridortilingproblem}}
 \end{figure}

 %

 \begin{theorem}\label{theorem:corridortilingproblemPSPACEcomplete}
 	The corridor tiling problem is PSPACE-complete \cite{Boas97theconvenience}.
 \end{theorem}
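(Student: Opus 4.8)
The plan is to prove the two directions separately, membership in PSPACE and PSPACE-hardness. For membership the only subtlety is that the height $m$ is unbounded a priori. I would first observe that each row of a candidate tiling is a word in $T^k$, so there are at most $|T|^k$ distinct rows, and that if any valid tiling exists then one exists in which no row is repeated: given a tiling with two identical rows $r_a = r_b$ ($a<b$), deleting the rows strictly above $a$ up to and including $b$ yields a shorter valid tiling with the same bottom and top rows, since the vertical matching constraint \constraintvertically only couples consecutive rows. Hence it suffices to search for tilings of height at most $|T|^k$. I would then give a nondeterministic procedure that guesses the tiling row by row, keeping in memory only the current row (of size $O(k\log|T|)$), the previous row in order to check \constraintvertically and the within-row constraint \constrainthorizontally, and a counter bounded by $|T|^k$ (still of size $O(k\log|T|)$, hence polynomial). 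It rejects if the counter is exceeded, checks that the first row equals $\bottomtile 1,\dots,\bottomtile k$ and that the end columns carry white sides, and accepts as soon as it produces a row equal to $\toptile 1,\dots,\toptile k$. This runs in polynomial space, so the problem is in NPSPACE, and by Savitch's theorem (already invoked in Proposition~\ref{proposition:problemcoveragePSPACE}) it is in PSPACE.

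For hardness I would reduce from the acceptance problem of a deterministic Turing machine $M$ running in space $p(n)$ on an input $x$ of length $n$, a canonical PSPACE-complete problem. The idea is the classical space-time diagram: a row of the corridor encodes a configuration of $M$ on a tape of length $k := p(n)$, and going up one row corresponds to one computation step. A cell symbol is either a plain tape symbol or a tape symbol annotated with the current state, the latter marking the head. I would design tile types whose up/down colors carry the cell content, so that vertical matching \constraintvertically forces the content of a cell in row $t+1$ to be exactly what $M$'s transition produces, while the left/right colors carry messages propagating the head position and the selected transition; horizontal matching \constrainthorizontally then checks these messages between adjacent cells, and the white side constraint marks the two tape ends. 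The bottom sequence is fixed to the initial configuration (state $q_0$, head on cell $1$, $x$ followed by blanks) and the top sequence to a single canonical accepting configuration. To make the existential quantifier over $m$ harmless, I would have $M$, upon accepting, erase its tape, return the head to cell $1$ and enter a unique accepting state that loops into itself; then the canonical accepting configuration, once reached at some time $t^\ast$, is a fixed point, so a valid corridor of height $m = t^\ast$ exists if and only if $M$ accepts $x$. Since $k = p(n)$ is polynomial, the whole construction is computable in polynomial time.

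The main obstacle is the tile gadget encoding $M$'s local transition, that is, showing that the statement \emph{``the content of cell $j$ at time $t+1$ is the correct function of cells $j-1,j,j+1$ at time $t$''} can be enforced purely by the left/right/up/down color matching of a constant number of tile types per transition rule. The delicate points are (i) propagating the head and state information horizontally through the left/right colors so that exactly the cell under the head and the neighbour it moves to are modified while every other cell is copied unchanged, and (ii) ruling out spurious tilings that contain no head, or two heads, within a row. Both are handled by making the head-carrying colors distinguished and by using the white boundary together with the fixed bottom row to force a single head to be present initially and to propagate consistently; everything else is routine bookkeeping.
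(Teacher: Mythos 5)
This theorem is not proved in the paper at all --- it is imported by citation from \cite{Boas97theconvenience} --- and your blind proof is a correct reconstruction of precisely the argument in that reference: membership by guessing the corridor row by row, using the observation that a tiling with a repeated row can be excised between the two occurrences (so the height can be assumed at most $|T|^k$, and a binary counter of $O(k\log|T|)$ bits suffices), then closing with Savitch's theorem; and hardness by the classical space--time-diagram simulation of a polynomial-space deterministic Turing machine, with the accepting configuration made a vertical fixed point so that the existentially quantified height $m$ is harmless. Both halves are sound, including the two delicate points you flag (the single-head invariant, which propagates inductively from the fixed bottom row of tiles, and the transport of head and state information through the horizontal colors), so there is nothing to correct.
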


\section{Complexity: lower bounds}
\label{section:complexitylowerbounds}
\newcommand{\colorellipse}{orange!20!white}

\subsection{PSPACE lower bounds}

In this subsection, we reduce the corridor tiling problem that is PSPACE-complete (Theorem~\ref{theorem:corridortilingproblemPSPACEcomplete}) to $\problemreachability$ and $\problemcoverage$. First we start with \problemreachability. Independently, a similar reachability problem, without a base, was proven PSPACE-hard in \cite{DBLP:conf/aaai/TateoBRAB18}. Their proof relies on Nondeterministic Constraint Logic  \cite{DBLP:conf/coco/DemaineH08}.

\begin{theorem}
	\problemreachability is PSPACE-hard.
	\label{theorem:reachabilitypspacehard}
\end{theorem}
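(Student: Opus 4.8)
The plan is to reduce the corridor tiling problem, which is PSPACE-complete by Theorem~\ref{theorem:corridortilingproblemPSPACEcomplete}, to \problemreachability. Given a corridor instance with tile set $T$, width $k$, bottom row $\bottomtile{1}, \dots, \bottomtile{k}$ and top row $\toptile{1}, \dots, \toptile{k}$, I build a topologic graph $G$ whose nodes are the base $\base$ together with one node $v_{j,t}$ for each column $j \in \{1, \dots, k\}$ and each tile type $t \in T$, with the intended meaning that a UAV sitting at $v_{j,t}$ records that column $j$ of the current top row of a partial tiling holds tile $t$. I take $n = k$ UAVs and set the target configuration to be $c = (v_{1, \toptile{1}}, \dots, v_{k, \toptile{k}})$, i.e.\ the configuration encoding the desired top row. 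This encoding is plainly polynomial in the size of the instance, so it remains to design $\rightarrow$ and $\communication$ so that executions from $(\base, \dots, \base)$ to $c$ correspond exactly to corridor tilings with the prescribed bottom and top rows.

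The central idea is to let \emph{connectivity encode the horizontal constraint} and \emph{moves encode the vertical constraint}. For communication I put an edge $v_{j,t} \communication v_{j+1, t'}$ exactly when $right(t) = left(t')$, and an edge $\base \communication v_{1,t}$ for every column-$1$ node; to handle the white left/right borders I only create the nodes $v_{1,t}$ with $left(t) = white$ and the nodes $v_{k,t}$ with $right(t) = white$. With one UAV per column, the only candidate communication edges among $\{\base, v_{1,t_1}, \dots, v_{k,t_k}\}$ lie along the path $\base, v_{1,t_1}, \dots, v_{k,t_k}$, so this vertex set is connected if and only if every horizontal match $right(t_j) = left(t_{j+1})$ holds; hence a configuration is valid precisely when it encodes a horizontally consistent row. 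For moves I put $v_{j,t} \rightarrow v_{j,t'}$ exactly when $up(t) = down(t')$ (a UAV stays in its column and climbs one row), together with the deployment edges $\base \rightarrow v_{j, \bottomtile{j}}$ for each $j$, and of course $\base \rightarrow \base$.

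With this graph the execution must proceed in lockstep. From $(\base,\dots,\base)$ the only way to reach a connected non-trivial configuration is for all $k$ UAVs to deploy simultaneously onto $(v_{1,\bottomtile{1}}, \dots, v_{k,\bottomtile{k}})$: any partial deployment leaves a column empty and breaks the path, and the deployment edges pin the tile of column $j$ to the prescribed $\bottomtile{j}$, so the first non-trivial configuration is forced to encode the bottom row. Thereafter each $\rightarrow$-step moves all UAVs at once, every one within its column along an edge witnessing $up = down$, and the resulting configuration must again be connected, hence horizontally consistent. Therefore the configurations traversed after deployment are exactly a sequence of horizontally consistent rows, each obtained from the previous by a legal vertical step, i.e.\ a corridor tiling, and reaching $c$ means the final row is $\toptile{\cdot}$. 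Conversely, any corridor tiling of height $m$ yields the execution consisting of the deployment step followed by $m-1$ vertical steps and ending at $c$. Thus $c$ is reachable if and only if the instance is tileable.

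The delicate points, which the formal proof must nail down, are all on the ``only if'' side. First I must argue that connectivity genuinely forces the path structure: since there are exactly $k$ UAVs, $k$ columns, no inter-column communication other than the consecutive-column edges, and at most one UAV per node outside $\base$, a connected configuration can have neither two UAVs in one column nor any UAV left at $\base$, so every valid configuration is a full horizontally consistent row with one UAV per column. Second, I must observe that no UAV can leave its column or return to the base once deployed, because no such $\rightarrow$-edges exist; hence the column discipline is preserved throughout and the synchronous move relation coincides with a single row advancement. Finally I should dispose of degenerate inputs in which the prescribed bottom or top row is itself horizontally inconsistent or violates a white border (so that $c$ is not even a legal configuration) by mapping them, after a polynomial check, to a fixed trivially-negative instance; in all remaining cases the correspondence above applies verbatim, yielding PSPACE-hardness of \problemreachability.
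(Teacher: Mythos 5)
Your construction is exactly the paper's reduction: one copy of $T$ per column (white-left tiles only in column $1$, white-right tiles only in column $k$), communication edges between consecutive columns encoding $right=left$, vertical moves inside each column encoding $up=down$, deployment edges $\base \rightarrow (\bottomtile{i},i)$, and the top row as target configuration. The gap is in your ``forcing'' argument --- precisely the step you flag as the delicate one. The claim that ``any partial deployment leaves a column empty and breaks the path'', and hence that all $k$ UAVs must deploy simultaneously and that no UAV may remain at $\base$, is false: Definition~\ref{definition:topologicgraph} guarantees $\base \rightarrow \base$, any number of UAVs may share $\base$, and $\base$ communicates with \emph{every} column-$1$ node. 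Deploying onto a prefix of columns $1,\dots,j$ while the remaining $k-j$ UAVs wait at $\base$ therefore yields a perfectly connected configuration, through the chain $\base$, column $1$, \dots, column $j$. Executions are consequently not forced to sweep a rectangle row by row; they may sweep a \emph{staircase}, in which horizontal matching is only enforced between the tiles simultaneously occupied, i.e.\ between row $r$ of column $i+1$ and row $r+(\tau_{i+1}-\tau_i)$ of column $i$, where $\tau_i$ is column $i$'s deployment time.

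This is not merely a presentational slip, and your preprocessing of degenerate borders does not repair it. Take $k=2$ and $T=\{A,M,C\}$ with $A=\langle w,1,h,0\rangle$, $M=\langle w,2,h,1\rangle$, $C=\langle h,5,w,4\rangle$ (components: left, up, right, down), $\bottomtile{1}=A$, $\bottomtile{2}=C$, $\toptile{1}=M$, $\toptile{2}=C$. Both prescribed rows are horizontally consistent ($right(A)=right(M)=h=left(C)$), yet no corridor tiling exists: $m=1$ fails because $A\neq M$, and $m\geq 2$ fails because no tile has down-color $5=up(C)$, so column $2$ cannot grow vertically. Nevertheless the target $c$ is reachable in your graph by the staggered execution $(\base,\base)\rightarrow((A,1),\base)\rightarrow((M,1),(C,2))=c$, every configuration being connected via $\base \communication (\cdot,1)$ and $right(M)=left(C)$. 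So with the construction as you state it, ``reachable'' does not imply ``tileable'', and the ``only if'' direction needs a genuine synchronization mechanism (or a proof that corridor tiling remains PSPACE-hard on instances where every staircase assembly extends to a rectangle tiling). To be fair, the paper's own proof leaves exactly this point at the level of an intuition paragraph (``once all the UAVs leaved the base\dots'') and never establishes lockstep behaviour either; but since you made the forcing claim explicit, the justification you give for it is demonstrably wrong and is where your proof breaks.
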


\begin{proof}
The  proof is by polynomial time reduction from the corridor tiling problem. 
To do so, we map a desired corridor tiling instance\linebreak[4] $(T,\toptile{1},...,\toptile{k},\bottomtile{1},...,\bottomtile{k})$ to the \problemreachability instance $(G, k, c)$ described below.

\paragraph{Description of $G$. }
The topologic graph is shown in Figure~\ref{reduction}. The set of nodes in $G$ contains the base $B$, a copy of $T$ with only tiles with white left-side, $k-2$ copies of $T$ and a copy of $T$ with only tiles with white right-side. Copies of $T$ are represented by ellipses in Figure~\ref{reduction}.

Possible moves for the UAVs are represented by the arrows~$\rightarrow$ in Figure~\ref{reduction}. Moreover, a UAV can also move
from tile $t$ to tile $t'$ when $up(t) = down(t')$, i.e.
                 $\minitile{\tilewhite}{\tilered}{\tilewhite}{\tilewhite} \rightarrow \minitile{\tilewhite}{\tilewhite}{\tilewhite}{\tilered}$, and $t$ and $t'$ belong to the same copy of $T$.
                
Communication links between regions are represented by \textcolor{blue!50!white}{$\communication$} in Figure \ref{reduction}. Moreover, a UAV on tile $t$ in the $i^{\text{th}}$ ellipse can communicate with another UAV on tile $t'$ in the $i+1^{\text{th}}$ ellipse if $right(t)=left(t')$, i.e. 
%
$\minitile{\tilewhite}{\tilewhite}{\tilered}{\tilewhite}$ \textcolor{blue!50!white}{$\communication$} $\minitile{\tilered}{\tilewhite}{\tilewhite}{\tilewhite}$.

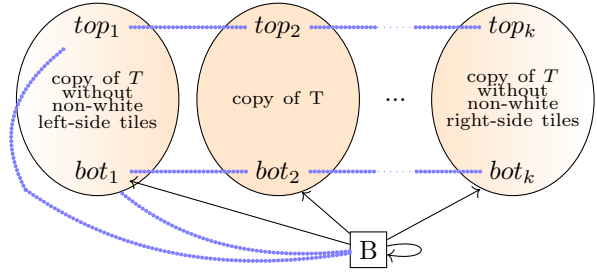
\begin{figure}[t]
	\begin{center}
		\newcommand{\xdots}{4.3}
		\newcommand{\dotswidth}{0.2}
		\newcommand{\xcopyk}{5.6}
		\begin{tikzpicture}[xscale=1.2,yscale=0.8]
		\draw[left color=white, right color=\colorellipse] (1,3) ellipse (0.9 and 1.6);
		\draw[fill=\colorellipse] (3,3) ellipse (0.9 and 1.6);
		\draw[left color=\colorellipse, right color=white] (\xcopyk,3) ellipse (0.9 and 1.6);
		\draw (4,0.5) node[draw] (nb) {B};
		\draw (1,1.8) node (n1) {$\bottomtile{1}$};
		\draw (3,1.8) node (n2) {$\bottomtile{2}$};
		\draw (\xcopyk,1.8) node (nk) {$\bottomtile{k}$};
		\draw (1,4.2) node (t1) {$\toptile{1}$};
		\draw (3,4.2) node (t2) {$\toptile{2}$};
		\draw (\xcopyk,4.2) node (tk) {$\toptile{k}$};
		\draw[->] (nb) -- (n1);
		\draw[->] (nb) -- (n2);
		\draw[->] (nb) -- (nk);
		\draw (1,3) node [text width = 1.8cm, text centered] {\scriptsize copy of $T$ \\[-1mm] without  \\[-1mm] non-white  \\[-2mm] left-side tiles};
		\draw (3,3) node [text width = 1.8cm, text centered] {\scriptsize{copy of T}};
		\draw (\xdots,3) node [text width = 1.8cm, text centered] {...};
		\draw (\xcopyk,3) node [text width = 1.8cm, text centered] {\scriptsize copy of $T$ \\[-1mm] without  \\[-1mm] non-white  \\[-2mm] right-side tiles};
		\draw[communication] (nb) to[bend left] (n1);
		\draw[communication] (nb) to[bend left] (0.2,1.5);
		\draw[communication] (0.2,1.5) to[bend left] (t1);
		\draw[communication] (n2) -- (n1);
		\draw[communication] (n2) -- (\xdots-\dotswidth,1.8);
		\draw[dotted,color=blue!50!white] (\xdots-\dotswidth,1.8) -- (\xdots+\dotswidth,1.8);
		\draw[communication] (nk) -- (\xdots+\dotswidth,1.8);
		\draw[communication] (t2) -- (t1);
		\draw[communication] (t2) -- (\xdots-\dotswidth,4.2);
		\draw[dotted,color=blue!50!white] (\xdots-\dotswidth,4.2) -- (\xdots+\dotswidth,4.2);
		\draw[communication] (tk) -- (\xdots+\dotswidth,4.2);
		\draw[->] (nb) to [loop right] (nb);
		\end{tikzpicture}
		\caption{Topologic graph of the corridor tiling problem reduction.}
		\label{reduction}
	\end{center}
\end{figure}
More formally $G$ = ($V$,$\rightarrow$,$\communication$) is defined by:
\begin{itemize}
	\item $V$ is the disjoint union of $\set{B}$, $\set{(t, 1) \suchthat t \in T \text{ and } left(t) = white}$, $\set{(t, j) \suchthat t \in T \text{ and } j \in \set{2, \dots, k-1}}$ and $\set{(t, k) \suchthat t \in T \text{ and } right(t) = white}$;
	\item $\rightarrow$ is the union of $\set{(B, (\bottomtile i, i)) \suchthat i \in \set{1, \dots, k}}$ and $\set{((t, i), (t', i)) \suchthat up(t) = down(t')}$ for all $i \in \set{1, \dots, k}$;
	\item $\communication$ is the union of $\set{(B, (t, 1)) \suchthat (t, 1) \in V}$ and $\set{((t, i), (t', i+1)) \suchthat right(t) = left(t')}$ for all $i \in \set{1, \dots, k-1}$.
		
\end{itemize}

The formal definition $G$ sums up the informal explanation given above. Tile $t$ in the $i^{\text{th}}$ ellipse is denoted by $(t, i)$.

The goal configuration is $c = (\toptile{1},...,\toptile{k})$.

\paragraph{Intuition. } 


The intuition is that once all the UAVs leaved the base, a configuration corresponds to a row of $k$ tiles in the $k \times m$ rectangle in Figure \ref{figure:coridortilingproblem}. The  position of the UAV in $i^\text{th}$ ellipse corresponds to the tile in column $i$. In such a row, tiles match horizontally by definition of \textcolor{blue!50!white}{$\communication$}. The second configuration $\toptile{1},...,\toptile{k}$ corresponds to the bottom row. Each time that the execution progresses, UAVs synchronously move in new tiles: it mimics 
a new row added to the tiling in construction. Tiles match vertically by definition of $\rightarrow$-transitions.

A tiling of the $k \times m$ rectangle whose top and bottom 
edges are $\toptile{1},...,\toptile{k}$ and $\bottomtile{1},...,\bottomtile{k}$ respectively exists if and only if the
UAVs can reach the configuration $(\toptile{1},...,\toptile{k})$. 
%
%
%
%
%
\end{proof}

\tikzstyle{copyofG} = [fill=gray!20!white, draw=none, rounded corners=8pt]

\newcommand{\addCommunicationB}[1]
{
	\node[above right = 0.5mm and 4mm of #1, inner sep=0mm] (nb2) {B};
	\draw[communication] (#1) -- (nb2);
}

\begin{theorem}
	\problemcoverage is PSPACE-hard.
	\label{theorem:coveragepspacehard}
\end{theorem}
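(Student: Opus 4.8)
The plan is to reduce from the corridor tiling problem, which is PSPACE-complete by Theorem~\ref{theorem:corridortilingproblemPSPACEcomplete}, reusing the topologic graph $G$ constructed in the proof of Theorem~\ref{theorem:reachabilitypspacehard}. That construction already does the essential work: any connected execution that has left the base must keep its $i$-th UAV anchored to $\base$ through the first ellipse, so the horizontal tiling constraint is forced by the $\communication$-links between consecutive ellipses and the vertical constraint by the $\rightarrow$-moves inside each ellipse. Consequently, reaching the configuration $(\toptile{1},\dots,\toptile{k})$ is equivalent to the existence of a valid tiling. To turn this into a reduction to \problemcoverage I must additionally arrange two things: that the UAVs can come back to $\base$, and that ``visiting every node'' is achievable exactly in the yes-instance.

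First I would modify $G$ into a graph $G'$ by adding a $\communication$-link between $\base$ and each top tile $(\toptile{i},i)$, together with return moves $(\toptile{i},i)\rightarrow\base$. The purpose of the base-links (this is exactly the communication-to-base gadget) is that, once the fleet occupies the top configuration, every UAV is \emph{directly} connected to $\base$, so the chain through the first ellipse is no longer needed and the UAVs may peel off one at a time and walk home without ever breaking connectivity. Crucially these links are attached only to the top tiles, so they are inert during the ascent: to sit on any non-top tile the fleet must still be anchored through the first ellipse, and hence the whole ascent continues to certify a valid tiling. I would also check in polynomial time that $\toptile{1},\dots,\toptile{k}$ already matches horizontally, so that base-anchoring the top row at the very end loses nothing.

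The heart of the argument, and the step I expect to be the main obstacle, is reconciling the \emph{global} coverage requirement with the enforcement of the tiling. A valid tiling uses only one tile type per cell, so an honest ascent touches only a handful of the tile nodes $(t,i)$, whereas \problemcoverage insists that every node be visited; yet any $\communication$-path from a tile back to $\base$ that avoids the horizontal chain would also let the fleet fake the ascent, so I cannot simply make every tile directly base-reachable. My plan is to cover the remaining tile nodes only \emph{after} the top configuration has been reached, exploiting the freshly added base-links as connectivity hubs: with the fleet sitting in the (now base-anchored) top row, I can free individual UAVs to sweep the ellipses in the downward $\rightarrow$-direction and back, always keeping enough UAVs on base-anchored top tiles to preserve connectivity, and I designate a single ``finish flag'' node, reachable only from the top configuration, as the lone genuinely hard-to-cover node. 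Proving that this sweeping phase can indeed reach every $(t,i)$ while staying connected, without ever offering a shortcut that shortcuts the ascent, is the delicate part that needs the most care.

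Finally I would establish the two directions. For soundness, from a valid tiling I build the covering execution explicitly: ascend row by row exactly as in Theorem~\ref{theorem:reachabilitypspacehard} until reaching $(\toptile{1},\dots,\toptile{k})$, cover the finish flag, then run the sweeping phase to visit the remaining nodes and return all UAVs to $\base$, so that $c^0=c^\ell=(\base,\dots,\base)$ and all regions are visited. For completeness, from any covering execution I isolate the first instant at which the fleet occupies the top configuration; the connectivity invariant maintained during the preceding non-top moves yields a valid tiling, precisely as in the reachability proof. Combined with membership in PSPACE from Proposition~\ref{proposition:problemcoveragePSPACE}, this shows \problemcoverage is PSPACE-complete.
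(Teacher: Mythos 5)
Your overall strategy---gate the return-and-cover machinery behind the goal configuration, then prove soundness and completeness around a ``finish flag''---matches the spirit of the paper's proof, but the paper reduces generically from \problemreachability (mapping an instance $(G,k,c)$ to the graph $G'$ of Figure~\ref{figure:topologicgraphproblemcoverage}) rather than re-opening the tiling construction, and that difference is exactly where your argument breaks. The step you yourself flag as delicate---the sweeping phase---does not go through in your graph, for two independent reasons. First, movement: inside an ellipse the only moves are $(t,i)\rightarrow(t',i)$ with $up(t)=down(t')$; there are no ``downward'' moves (the relation is not symmetric) and, in your modification, no moves from tiles back to $\base$ except from the top row. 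Hence a tile type whose bottom color matches no $up$-color reachable from $(\bottomtile{i},i)$ is simply $\rightarrow$-unreachable, yet \problemcoverage demands that \emph{every} node be visited; so in your $G'$ a yes-instance of corridor tiling need not admit any covering execution, and soundness fails. Second, connectivity: a lone UAV at a tile $(t,i)$ with $2\le i\le k-1$ can communicate only with horizontally matching occupants of ellipses $i\pm 1$; your base-anchored top-row UAVs do not help unless the $right$/$left$ colors happen to match, so the sweeper is generically disconnected, and ``keeping enough UAVs on base-anchored top tiles'' does not restore connectivity.

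The paper resolves both problems at once by adding two \emph{universal} nodes to $G$: a node $v_1$ with moves to and from every node of $G$ (so that every node can be visited regardless of $G$'s move structure), and a node $v_k$ that $\communication$-communicates with every node of $G'$ (so that the sweeping UAV is always connected through the UAV parked at $v_k$). These powerful nodes are gated by the fresh row $s_1,\dots,s_k$: each $v_i$ is reachable only through $s_i$, visiting any $v_i$ requires a UAV already sitting at $v_k$, and the first visit to $s_k$ forces the configuration to be exactly $(s_1,\dots,s_k)$---your finish-flag idea, in effect---whence the configuration one step earlier must be $(c_1,\dots,c_k)$. Without analogues of $v_1$ and $v_k$ (or some other mechanism making all tile nodes reachable and connectivity trivial once the goal is certified), your reduction cannot be completed; and once such a generic gadget is needed anyway, reducing from \problemreachability rather than directly from corridor tiling, as the paper does via Theorem~\ref{theorem:reachabilitypspacehard}, is both cleaner and sufficient.
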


\begin{proof}
	The proof is by reduction from \problemreachability. To do so we map an instance  $(G, k, c)$ of \problemreachability to the instance $(G', k)$ of \problemcoverage where $G'$ is depicted in Figure \ref{figure:topologicgraphproblemcoverage}. $G'$ contains $G$ as a subgraph, plus fresh nodes $v_1, \dots, v_k$ and $s_1, \dots, s_k$. A UAV can move from any node of $G$ to $v_1$ and vice-versa.

		\begin{figure}[t]
			\begin{center}
				\newcommand{\ymax}{5.3}
				\newcommand{\ybase}{2}
				{\begin{tikzpicture}[scale=0.9]
				\draw[copyofG] (-0.2,\ybase-0.3) rectangle (8.2,3.5);
				\draw (4,\ybase) node[draw] (nb) {\base};
				\draw (1,4) node (s1) {$s_1$};
				\draw (3,4) node (s2) {$s_2$};
				\draw (7,4) node (sk) {$s_k$};
				\draw (1,3.2) node (t1) {$c_1$};
				\draw (3,3.2) node (t2) {$c_2$};
				\draw (7,3.2) node (tk) {$c_k$};
				\draw (1,4.8) node (v1) {$v_1$};
				\draw (3,4.8) node (v2) {$v_2$};
				\draw (7,4.8) node (vk) {$v_k$};
				\draw (8,5.3) node (all) {all};
				\draw[->] (t1) -- (s1);
				\draw[->] (t2) -- (s2);
				\draw[->] (tk) -- (sk);
				\draw[,communication] (s1) -- (s2);
				\draw[,communication] (s2) -- (4.5,4);
				\draw[dotted,color=blue!50!white] (4.5,4) -- (5.5,4);
				\draw[,communication] (sk) -- (5.5,4);
				\draw[,communication] (t1) -- (t2);
				\draw[,communication] (t2) -- (4.5,3.2);
				\draw[dotted,color=blue!50!white] (4.5,3.2) -- (5.5,3.2);
				\draw[,communication] (tk) -- (5.5,3.2);
				\addCommunicationB{s1}
				\draw[->] (s1) -- (v1);
				\draw[->] (s2) -- (v2);
				\draw[->] (sk) -- (vk);
				\draw[<->] (v1) to[bend right] (0.1,3.4);
				\path (vk) edge [loop right] (vk);
				\draw[->] (nb) to [loop right] (nb);
				\draw [->, rounded corners = 8pt] (vk) -- (7,\ymax) -- (-1,\ymax) -- (-1, \ybase)  -- (nb);
				\draw (v1) -- (1,\ymax);
				\draw (v2) -- (3,\ymax);
				\draw[,communication] (vk) -- (all);
				\node at (4, 2.5) {copy of $G$};
				\end{tikzpicture}}
				\caption{Topologic graph $G'$ of the $\problemcoverage$-instance constructed from the $\problemreachability$-instance.}
				\label{figure:topologicgraphproblemcoverage}
			\end{center}
		\end{figure}
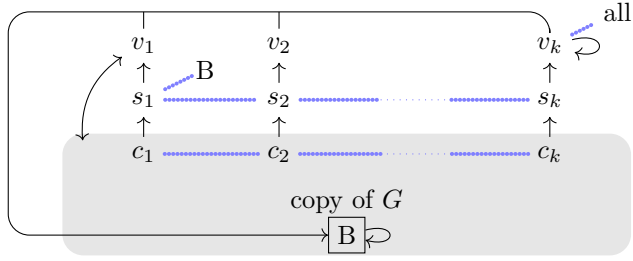

	 Node $s_1$ can communicate with the base $B$ and node $v_k$ can communicate with all nodes of $G'$. Now we prove that the $k$ UAVs can progress to the configuration ($c_1,\dots,c_k$) in $G$  if and only if there exists a covering execution in~$G'$.

($\Rightarrow$) If the UAVs are in the configuration ($c_1,\dots,c_k$) then they can progress in one step to configuration ($s_1,\dots,s_k$). Then, they have no choice but progress to the configuration ($v_1,\dots,v_k$). 
Once in this configuration, the UAV placed on the node $v_k$ can communicate with any UAV, placed on any node, and to the base $B$. Actually that UAV will stay at $v_k$.
Meanwhile the UAV placed on the node $v_1$ will visit all unvisited nodes of $G$  and come back to $v_1$ while keeping communication to the base through the UAV placed on $v_k$. Meanwhile, UAVs placed on $v_2,\dots,v_{k-1}$ come back to $B$. Finally, when all the nodes
have been visited, both UAVs on $v_1$ and $v_k$ come back to $B$.

\newcommand{\timesk}{t_{s_k}}

($\Leftarrow$) If there exists a covering execution of the whole graph $G'$, it means all nodes have been visited. In particular, node $s_k$ has been visited and let us consider the first time $\timesk$ when $s_k$ is visited. Time $\timesk-1$ denotes the time just before $\timesk$.

\begin{fact}
	\label{fact:nodesoutsideGprimeunvisited}
 At time $\timesk-1$, no node $v_i$ were visited and no node $s_i$ were visited.
\end{fact}
\begin{proof}
 Suppose by contradiction that a node $v_i$ was visited by some UAV before $\timesk$, then the only possibility such a UAV to communicate to the base is that there is also a UAV at $v_k$ at time $\timesk$. But then, it means that $s_k$ was visited strictly before $\timesk$, leading to a contradiction. Thus, no node $v_i$ were visited at time $\timesk$ (thus at time $\timesk-1$).
 
 As no node $v_i$ are visited before $\timesk$, no node $s_i$ are visited before $\timesk-1$.
\end{proof}

\begin{fact}
	\label{fact:reachc1ck}
	At time $\timesk-1$, the configuration is $(c_1, \dots, c_k)$.
\end{fact}

\begin{proof}
 At time $\timesk$, as the UAV at $s_k$ needs to communicate, the only possibility is that the configuration is $(s_1, \dots, s_k)$. Thus, the only possibility is that configuration is $(c_1, \dots, c_k)$.
\end{proof}

Facts~\ref{fact:nodesoutsideGprimeunvisited} implies that implies that the prefix from time 0 to time $\timesk-1$ of the covering execution is an execution in $G$. Fact~\ref{fact:reachc1ck} implies that subexecution reaches $(c_1,\dots,c_k)$.
\end{proof}

\subsection{NP lower bound for bounded problems}
In this subsection, we reduce the square tiling problem which is NP-complete (Theorem~\ref{theorem:squaretilingproblemPSPACEcomplete}) to $\problemboundedreachability$ and $\problemboundedcoverage$.

\begin{theorem}
	\problemboundedreachability is NP-hard.
	\label{theorem:boundedreachbilityisnphard}
\end{theorem}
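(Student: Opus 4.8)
The plan is to reduce the square tiling problem, which is NP-complete by Theorem~\ref{theorem:squaretilingproblemPSPACEcomplete}, to \problemboundedreachability, reusing the encoding idea of Theorem~\ref{theorem:reachabilitypspacehard}: a configuration of $k$ UAVs will represent one row of $k$ tiles, the UAV in column $c$ sitting on the tile placed in that column. As there, horizontal matching of a row will be enforced by the connectivity (communication) requirement, and vertical matching between consecutive rows by the move relation $\rightarrow$, so that each synchronous step of the execution stacks one new row on top of the current one. Two features are new compared with the corridor: the square has a fixed height $k$, which I will pin down with the length bound $\ell$, and all four boundary sequences must be respected. Since the left and right boundary colors depend on the row, I cannot reuse a single copy of $T$ per column as in the corridor reduction; instead I will index the tile nodes by their row, making the graph \emph{layered}.

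Concretely, I would take $n=k$ UAVs and, for each tile $t\in T$, each row $r\in\{1,\dots,k\}$ and each column $c\in\{1,\dots,k\}$, create a node $(t,r,c)$, keeping it only if it is compatible with the boundary: $down(t)=bot_c$ if $r=1$, $up(t)=top_c$ if $r=k$, $left(t)=left_r$ if $c=1$, and $right(t)=right_r$ if $c=k$ (a corner node must satisfy both of its constraints). I also add fresh goal nodes $g_1,\dots,g_k$. The moves are $B \rightarrow (t,1,c)$ (entering the bottom row), $(t,r,c) \rightarrow (t',r+1,c)$ whenever $up(t)=down(t')$ (climbing one row), and $(t,k,c) \rightarrow g_c$ (leaving the top row). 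The communications are $B \communication (t,r,1)$ for every left-column node (the base anchors each row through its leftmost tile), $(t,r,c) \communication (t',r,c+1)$ whenever $right(t)=left(t')$ (horizontal matching inside a row), and $g_c \communication g_{c+1}$ together with $B \communication g_1$ (so the goal row is connected). Finally I set the target configuration to $(g_1,\dots,g_k)$ and the bound to $\ell=k+1$. This construction is clearly polynomial, and $\ell=k+1$ fits in unary.

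I would then prove the equivalence. For the forward direction, a tiling $\lambda$ yields the execution whose configuration at step $r$ is $((\lambda(r,1),r,1),\dots,(\lambda(r,k),r,k))$ for $r=1,\dots,k$, followed by $(g_1,\dots,g_k)$: horizontal matching together with the base anchoring keeps each of the $k$ rows connected, vertical matching makes every climbing step a legal transition, and the boundary sequences hold because of the node restrictions; this execution has length exactly $k+1$. For the converse, I would show that any execution reaching $(g_1,\dots,g_k)$ in at most $k+1$ steps is forced to be the intended row-by-row climb: to reach a goal node $g_c$ a UAV must travel $B\to\text{layer }1\to\dots\to\text{layer }k\to g_c$, which already costs $k+1$ steps, so no UAV may linger at $B$ via its self-loop and none may dawdle; the layering rules out any shortcut or early exit; and connectivity forces exactly one UAV per column at every step, because the base anchors only the left column, so a missing column would disconnect the chain. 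The $k$ intermediate rows therefore read off a function $\lambda$ satisfying all six tiling constraints.

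The main obstacle is precisely this rigidity argument in the converse direction: I must rule out every degenerate behavior --- a UAV idling at the base, two UAVs crowding the same column, or the goal being reached along an unintended path --- and conclude that meeting the bound $\ell=k+1$ leaves only the clean synchronous climb. The length bound and the row-indexed (layered) construction are exactly what purchase this rigidity; the layering is also what lets me encode the per-row left and right boundary colors, which the corridor reduction sidestepped by making those edges uniformly white.
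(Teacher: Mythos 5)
Your reduction is correct, and it shares the paper's skeleton (reduce the square tiling problem; a configuration encodes one row; \communication-connectivity enforces horizontal matching; $\rightarrow$ enforces vertical matching; a tight unary bound forces the lock-step climb), but it resolves the two new difficulties --- the fixed height $k$ and the per-row left/right boundary colors --- with a genuinely different gadget. The paper (Figure~\ref{bounded-reduction}) keeps the corridor-style graph of Theorem~\ref{theorem:reachabilitypspacehard}, with tiles indexed by \emph{column only} and the row implicit in time; it then surrounds the $k$ copies of $T$ with a frame of border nodes ($\lefttile{i}$- and $\righttile{i}$-lanes plus $\bottomtile{i}$, $\toptile{i}$) and adds \emph{two extra UAVs} that climb these lanes, so that the lane UAVs' positions encode the current row number in space and lane-to-tile communication edges ($(i,0) \communication (t,1)$ iff $\lefttile{i}=left(t)$, symmetrically on the right) enforce the side colors; the instance thus has $k+2$ UAVs and bound $k+2$. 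You instead bake the row index into the tile nodes themselves (a layered graph on $O(|T|k^2)$ nodes versus the paper's $O(|T|k+k)$), prune boundary-incompatible nodes, and use $k$ UAVs with a goal row and bound $k+1$. What each buys: the paper's graph is smaller and reuses the corridor construction almost verbatim (its lane trick is also recycled later for the neighbor-communicable case), while your layering makes the converse direction more transparently rigid --- every move strictly increments the layer, so the bound $k+1$ immediately rules out idling at $B$, column-preservation of the climbing moves plus the final configuration $(g_1,\dots,g_k)$ pins exactly one UAV per column, and the $B$-anchored chain through column $1$ forces horizontal matching; indeed you spell out a correctness argument that the paper's proof leaves essentially implicit. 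One small remark: your claim that one \emph{cannot} reuse a single copy of $T$ per column is refuted by the paper's lane construction --- it is a design choice, not a necessity --- but this does not affect the validity of your proof.
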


\begin{figure}[t]
	\begin{center}
		\newcommand{\xdots}{4.3}
		\newcommand{\dotswidth}{0.2}
		\newcommand{\xcopyk}{5.6}
		\newcommand{\xrightnode}{7.6}
		\newcommand{\coordinategeek}[2]{\hspace{-4mm}\raisebox{3mm}{\tiny{$(#1,#2)$}}}
		\scalebox{0.75}
		{\begin{tikzpicture}
			\draw[fill=\colorellipse] (1,3) ellipse (0.9 and 1.6);
			\draw[fill=\colorellipse] (3,3) ellipse (0.9 and 1.6);
			\draw[fill=\colorellipse] (\xcopyk,3) ellipse (0.9 and 1.6);
			\draw[dashed,rounded corners=8pt] (-0.2,1.2) rectangle (\xcopyk+1.2,4.8);
			\draw (4,-0.3) node[draw] (nb) {B};
			\draw (1,0.75) node (n1) {$\bottomtile{1}$\coordinategeek{0}{1}};
			\draw (3,0.75) node (n2) {$\bottomtile{2}$\coordinategeek02};
			\draw (\xcopyk,0.75) node (nk) {$\bottomtile{k}$\coordinategeek0k};
			\draw (-1,0.75) node (lb) {$\lefttile{bot}$\coordinategeek00};
			\draw (-1,2.2) node (l1) {$\lefttile{1}$\coordinategeek 10};
			\draw (-1,3.8) node (lk) {$\lefttile{k}$\coordinategeek k0};
			\draw (-1,5.25) node (lt) {$\lefttile{top}$\coordinategeek{k+1}0};
			\draw (\xrightnode,0.75) node (rb) {$\righttile{bot}$\coordinategeek0{k+1}};
			\draw (\xrightnode,2.2) node (r1) {$\righttile{1}$\coordinategeek1{k+1}};
			\draw (\xrightnode,3.8) node (rk) {$\righttile{k}$\coordinategeek k{k+1}};
			\draw (\xrightnode,5.25) node (rt) {$\righttile{top}$\coordinategeek {k+1}{k+1}};
			\draw (1,5.25) node (t1) {$\toptile{1}$\coordinategeek{k+1}{1}};
			\draw (3,5.25) node (t2) {$\toptile{2}$\coordinategeek{k+1}{2}};
			\draw (\xcopyk,5.25) node (tk) {$\toptile{k}$\coordinategeek{k+1}{k}};
			\draw[->] (nb) -- (lb);
			\draw[->] (nb) -- (n1);
			\draw[->] (nb) -- (n2);
			\draw[->] (nb) -- (nk);
			\draw[->] (nb) -- (rb);
			\draw[->] (lb) -- (l1);
			\draw[->,dashed] (l1) -- (lk);
			\draw[->] (lk) -- (lt);
			\draw[->] (rb) -- (r1);
			\draw[->,dashed] (r1) -- (rk);
			\draw[->] (rk) -- (rt);
			\draw (1,3) node [text width = 1.8cm, text centered] {\scriptsize{copy$_{1}$ of T}};
			\draw (3,3) node [text width = 1.8cm, text centered] {\scriptsize{copy$_{2}$ of T}};
			\draw (4.25,3) node [text width = 1.8cm, text centered] {...};
			\draw (\xcopyk,3) node [text width = 1.8cm, text centered] {\scriptsize{copy$_{k}$ of T}};
			\draw[communication] (lb) -- (n1);
			\draw[communication] (n2) -- (n1);
			\draw[communication] (n2) -- (4,0.75);
			\draw[dotted,color=blue!50!white] (4,0.75) -- (4.5,0.75);
			\draw[communication] (nk) -- (4.5,0.75);
			\draw[,communication] (nk) -- (rb);
			\draw[communication] (l1) -- (0.8,2.2);
			\draw[dotted,color=blue!50!white] (1,2.2) -- (\xcopyk+0.2,2.2);
			\draw[communication] (r1) -- (\xcopyk+.2,2.2);
			\draw[communication] (lk) -- (0.8,3.8);
			\draw[dotted,color=blue!50!white] (0.8,3.8) -- (\xcopyk+.2,3.8);
			\draw[communication] (rk) -- (\xcopyk+.2,3.8);
			\draw[communication] (nk) -- (rb);
			\draw[->] (0.8,3.8) -- (t1);
			\draw[->] (1,3.8) -- (t1);
			\draw[->] (1.2,3.8) -- (t1);
			\draw[->] (3,3.8) -- (t2);
			\draw[->] (\xcopyk,3.8) -- (tk);
			\draw[->] (\xcopyk+.1,3.8) -- (tk);
			\draw[->] (n1) -- (0.9,2.2);
			\draw[->] (n1) -- (1.1,2.2);
			\draw[->] (n2) -- (2.8,2.2);
			\draw[->] (n2) -- (3,2.2);
			\draw[->] (n2) -- (3.2,2.2);
			\draw[->] (nk) -- (\xcopyk,2.2);
			\draw[communication] (lt) -- (t1);
			\draw[communication] (t2) -- (t1);
			\draw[communication] (t2) -- (4,5.25);
			\draw[dotted,color=blue!50!white] (4,5.25) -- (4.5,5.25);
			\draw[communication] (tk) -- (4.5,5.25);
			\draw[communication] (tk) -- (rt);
			\draw[->] (nb) to [out=330,in=300,looseness=8] (nb);
			\draw[communication] (nb) to[bend left] (lb);
			\draw[communication] (nb) to[bend left] (-2,1);
			\draw[communication] (-2,1) to[bend left] (l1);
			\draw[communication] (-2,1) to[bend left] (lk);
			\draw[communication] (-2,1) to[bend left] (lt);
			\end{tikzpicture}}
		\caption{Topologic graph of the bounded tiling problem reduction.}
		\label{bounded-reduction}
	\end{center}
\end{figure}

\begin{proof}

 The  proof is by polynomial reduction from the square tiling problem. From a instance $(T, \overrightarrow{\toptile{}},  \overrightarrow{\bottomtile{}},  \overrightarrow{\lefttile{}}, \overrightarrow{\righttile{}},k)$ of the square tiling problem, we will construct a $\problemboundedreachability$-instance $(G, c, k+2)$.
The topologic graph $G$ of Figure~\ref{bounded-reduction} looks like the one of Figure~\ref{reduction}. It uses the same conventions for 
 movements and communication. 
 In this graph, the ellipses are now surrounded by nodes  $\bottomtile{1},\dots,\bottomtile{k},\toptile{1},\dots,\toptile{k},\lefttile{1},\dots,\lefttile{k},\righttile{1},\dots,\righttile{k}$ that represent the bottom, top, left and right edge colors of the $k \times k$-square. More precisely:
 
 \begin{itemize}
 	\item  $\lefttile{i}$ (resp. $\righttile{i}$) is $\communication$-connected to all tiles of the first (resp. $k^{\text{th}}$) copy whose left (resp. right) color is $\lefttile{i}$ (resp. $\righttile{i}$);
 	\item  $\bottomtile{i}$ (resp. $\toptile{i}$) is $\rightarrow$-connected to (is $\rightarrow$-reachable from) all tiles of the $i^{\text{th}}$ copy whose bottom (resp. top) color is $\bottomtile{i}$ (resp. $\toptile{i}$).
 \end{itemize}

Actually the idea of the reduction is similar to the proof of Theorem~\ref{theorem:reachabilitypspacehard} except that now, two extra UAVs runs on the $\lefttile{i}$- and $\righttile{i}$-lanes to control both the left- and right- boundary color constraints and the vertical size of the square tiling.
 More formally $G$ = ($V$,$\rightarrow$,$\communication$) is defined by:
 \begin{itemize}
 	\item $V$ is the disjoint union of the sets $\set{B}$, $T \times \set{1, \dots, k}$ and $\set{(i, j) \in \set{0, \dots, k+1}^2 \suchthat \text{either ($i=0$ or $i=k+1$) or ($j=0$ or $j=k+1$)}}$;
 	\item $\rightarrow$ is the union of the sets  $\set{(B, (0, i)) \suchthat i \in \set{0, \dots, k+1}}$,  $\set{((t, i), (t', i)) \suchthat up(t) = down(t')}$, $\set{((0,i), (t, i)) \suchthat \bottomtile{i} = down(t)}$, $\set{((t, i), (k+1, i)) \suchthat \toptile{i} = up(t)}$  for all $i \in \set{1, \dots, k}$, $\{((i,0),(i+1,0)), i \in \{0,\dots,k\}\}$ and  $\{((i,k+1),(i+1,k+1), i \in \{0,\dots,k\} \}$;
 	\item $\communication$ is the union of the sets $\set{(B, (i, 0)) \suchthat i \in \set{0, \dots, k+1}}$, and $\set{((t, i), (t', i+1)) \suchthat right(t) = left(t')}$ for all $i \in \set{1, \dots, k-1}$, $\set{((i,j),(i,j+1)) \suchthat \text{($i=0$ or $i=k+1$) and $j\in\set{0, \dots, k}$}}$, $\set{((i, 0), (t, 1)) \suchthat \lefttile{i} = left(t)}$ and $\set{((i, k+1), (t, k)) \suchthat \righttile{i} = right(t)}$.
 	
 \end{itemize}

 The \problemboundedreachability instance is $(G, c, k+2)$ where $c$ is the configuration $((k+1,0), \dots (k+1, k+1))$ ($(\lefttile{top}, \toptile{1}, \dots, \toptile{k}, \righttile{top})$ in Figure~\ref{bounded-reduction}).
\end{proof}

\begin{theorem}
	\problemboundedcoverage is NP-hard.
\end{theorem}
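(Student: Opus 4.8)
The plan is to reduce from the square tiling problem (NP-complete by Theorem~\ref{theorem:squaretilingproblemPSPACEcomplete}), reusing the graph $G_0$ built in the proof of Theorem~\ref{theorem:boundedreachbilityisnphard} (Figure~\ref{bounded-reduction}) and attaching to it the coverage gadget of Theorem~\ref{theorem:coveragepspacehard} (Figure~\ref{figure:topologicgraphproblemcoverage}); this mirrors how \problemcoverage was obtained from \problemreachability. From a square tiling instance of parameter $k$ I would set $n = k+2$ and build $G'$ as a copy of $G_0$ augmented with fresh nodes $v_1, \dots, v_n$ and $s_1, \dots, s_n$, the moves $c_i \rightarrow s_i \rightarrow v_i$, a move edge between $v_1$ and every node of $G_0$ in both directions, a communication edge between $s_1$ and $B$, and communication edges from $v_n$ to every node of $G'$. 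The \problemboundedcoverage instance would be $(G', n, L)$ with $L := (k{+}2) + 2 + 2|V| + R$, where $|V|$ is the number of nodes of $G_0$ and $R$ bounds the number of steps needed to bring all UAVs back to $B$ from the gadget. As $|V|$ and $R$ are polynomial, $L$ is polynomial and writable in unary, so the reduction runs in polynomial time.

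For the forward direction, if a tiling exists then by Theorem~\ref{theorem:boundedreachbilityisnphard} there is an execution of $G_0$ reaching the top configuration $c = (c_1, \dots, c_n)$ in $k+2$ steps. I would extend it exactly as in the ($\Rightarrow$) argument of Theorem~\ref{theorem:coveragepspacehard}: step to $(s_1, \dots, s_n)$, then to $(v_1, \dots, v_n)$, keep one UAV at $v_n$ as a universal relay while the UAV at $v_1$ shuttles $v_1 \rightarrow u \rightarrow v_1$ through every still-unvisited node $u$ of $G_0$ (at most $2|V|$ steps) and the others return to $B$, and finally bring the relay and shuttle UAVs back to $B$. Every node of $G'$ is then visited and the length is at most $(k{+}2) + 2 + 2|V| + R = L$.

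For the backward direction I would reuse Facts~\ref{fact:nodesoutsideGprimeunvisited} and~\ref{fact:reachc1ck} unchanged, since they depend only on the gadget: in any covering execution, the configuration one step before $s_n$ is first visited is exactly $c$, and the prefix up to that step is an execution of $G_0$ reaching $c$. I expect the main obstacle to lie precisely here, because extraction alone yields a reaching execution of a priori unbounded length, and since covering $G_0$ costs $\Theta(|V|)$ shuttling steps a careless budget would let one trade those steps for an over-long reaching phase, producing a covering execution with no underlying tiling. Reducing from the square tiling instances is what removes this tension: in $G_0$ the $\lefttile{i}$- and $\righttile{i}$-lanes carry no self-loops, so a UAV on them advances exactly one node per step and, through its communication links to the tile copies, forces every UAV to climb one row per step with no dawdling or revisiting. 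Hence $c$ can be reached only at time $k+2$, and reaching $c$ at all already certifies a valid $k \times k$ tiling by Theorem~\ref{theorem:boundedreachbilityisnphard}, which closes the equivalence for the budget $L$.
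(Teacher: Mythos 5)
Your proposal is correct and takes essentially the same route as the paper: compose the square-tiling-to-\problemboundedreachability reduction of Theorem~\ref{theorem:boundedreachbilityisnphard} with the coverage gadget of Figure~\ref{figure:topologicgraphproblemcoverage} (with $k+2$ UAVs), and bound the plan length by (steps to reach $c$) $+$ (two gadget steps) $+$ (a polynomial shuttling budget), which matches the paper's $\ell = (k+2) + 2 + (2\times|G|+1)$. Your key observation for the backward direction --- that forced forward movement in the bounded-reachability graph makes the length bound inessential, so merely reaching $c$ certifies a tiling --- is exactly the paper's remark that in these instances every execution has length at most $k+2$, so existence of an execution coincides with existence of a bounded one.
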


\begin{proof}
	The idea is similar than for Theorem \ref{theorem:coveragepspacehard}. We proceed by polynomial time reduction from the square tiling problem. First we apply the reduction  given in the proof of Theorem~\ref{theorem:boundedreachbilityisnphard}: from an instance $(T, \overrightarrow{\toptile{}},  \overrightarrow{\bottomtile{}},  \overrightarrow{\lefttile{}}, \overrightarrow{\righttile{}},k)$  we obtain an \problemboundedreachability-instance  of the form $(G, c, k+2)$ (as depicted in Figure~\ref{bounded-reduction}), where $c = ((k+1, 0), \dots, (k+1, k+1))$.
	 Notice that these instances are such that all executions are of length at most $k+2$. Therefore, the existence of an execution is the same that the existence of an execution of length at most $k+2$.
	Thus, the same construction of Figure \ref{figure:topologicgraphproblemcoverage} (just $k+2$ instead of $k$) is sound. Indeed, from $(G, c, k+2)$, we construct the instance $(G', k+2, \ell)$ where $\ell$ is the sum of $k+2$ (the number of steps to reach the configuration $c$), 2 (the two steps to reach the configuration $(v_1, \dots, v_{k+2})$), $2 \times |G| + 1$ (an upper bound of the number of steps for the drone in $v_1$ to visit the unvisited node in the subgraph $G$ and to come back to the base).
\end{proof}

\subsection{Restrictions to neighbor-communicable graphs}

In this subsection, we prove that the lower bounds still hold for neighbor-communicable graphs.

\begin{theorem}
	\label{theorem:reachabilitycoveragePSPACEhardneighbour-communicable}
	\problemreachability and \problemcoverage are PSPACE-hard even when restricted to neighbor-communicable topologic graphs.
\end{theorem}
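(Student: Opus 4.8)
My plan is to keep the tiling reductions of Theorems~\ref{theorem:reachabilitypspacehard} and~\ref{theorem:coveragepspacehard} and to repair the graphs they produce so that they become neighbor-communicable, i.e. so that $v \rightarrow v'$ always entails $v \communication v'$. In the graph of Figure~\ref{reduction} the move edges that are \emph{not} communication edges are of exactly two kinds: the vertical within-copy moves $(t,i) \rightarrow (t',i)$ with $up(t) = down(t')$, and the entry moves $\base \rightarrow (\bottomtile{i}, i)$. For \problemcoverage, the gadget of Figure~\ref{figure:topologicgraphproblemcoverage} contributes further such edges (namely $c_i \rightarrow s_i$, $s_i \rightarrow v_i$, the roaming edges out of $v_1$, and the return edges to $\base$). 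The plan is to add, on top of every move edge, the communication edge demanded by neighbor-communicability, and to show that none of these additions changes the answer of the decision problem.

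First I would dispose of the within-copy edges by adding each $(t,i) \communication (t',i)$. The key observation is that a UAV never changes column: the only moves are the entry into a column and the strictly upward moves inside it, and no edge leaves a column. Since the goal configuration $(\toptile{1}, \dots, \toptile{k})$ places exactly one UAV at the top of each of the $k$ columns and there are exactly $k$ UAVs, any execution reaching the goal must keep at most one UAV per column at all times (two UAVs trapped in a single column could never both reach a valid top node). Hence, along any reaching execution, a within-copy communication edge has at most one occupied endpoint and never lies in the connectivity subgraph $(\setofregionsUAVs, \communication \cap \setofregionsUAVs \times \setofregionsUAVs)$; adding these edges therefore neither forbids nor creates reaching executions, and correctness is inherited from Theorem~\ref{theorem:reachabilitypspacehard}.

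The entry moves are the delicate point, because adding $\base \communication (\bottomtile{i}, i)$ directly would let the bottom row be connected through the base even when it violates the horizontal constraint~\constrainthorizontally. I would therefore insert a fresh dummy row $e_1, \dots, e_k$ below the bottom tiles, replacing $\base \rightarrow (\bottomtile{i}, i)$ by $\base \rightarrow e_i \rightarrow (\bottomtile{i}, i)$ (with no self-loop on the $e_i$) and adding the forced edges $\base \communication e_i$ and $e_i \communication (\bottomtile{i}, i)$. The base links now sit on the dummy row, which encodes no tiling row, while the genuine bottom row keeps exactly its original neighbourhood (its single copy-$1$-to-base link and the horizontal chain), so matching on the first row is still enforced. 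Since the dummy row only prepends entry steps in which every UAV sits at $\base$ or at some $e_i$, the goal is reachable in the repaired graph iff it is reachable in the graph of Figure~\ref{reduction}, which gives PSPACE-hardness of \problemreachability on neighbor-communicable graphs.

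Finally, for \problemcoverage I would re-run the construction of Theorem~\ref{theorem:coveragepspacehard} on top of the repaired neighbor-communicable reachability graph and again add the forced communication edge to every gadget move. What must survive are Facts~\ref{fact:nodesoutsideGprimeunvisited} and~\ref{fact:reachc1ck}, whose proofs rely on the fact that, before $s_k$ is first visited, the only way to keep a UAV connected to the base is the $s$-chain, which forces the configuration $(c_1, \dots, c_k)$. Here I would exploit that the number of UAVs is exactly $k$: every alternative connectivity route opened by a newly added edge (through some $c_i$, some $s_i$, or the return structure) would require relaying across a chain, hence more occupied nodes than the $k$ available UAVs can supply, so it cannot be realised. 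To make this airtight for the return-to-base edges — where neighbor-communicability is most dangerous, since a direct $v_i \communication \base$ would immediately destroy Fact~\ref{fact:nodesoutsideGprimeunvisited} — I would route all returns along a single shared path to the base that is longer than $k$, so that it can never serve as a communication relay and is kept connected, during the covering phase, only by the universal connector $v_k$. The main obstacle is precisely this: guaranteeing that the communication edges forced on the base-incident and return-incident moves cannot short-circuit the connectivity arguments; the dummy entry row, the over-long return path, and the tight UAV budget are the three devices that neutralise them.
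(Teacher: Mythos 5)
Your \problemreachability half is essentially the paper's own proof: your dummy entry row $e_1, \dots, e_k$ is exactly the paper's intermediate nodes $m_i$ inserted between $\base$ and each $\bottomtile{i}$, followed by adding $v \communication v'$ whenever $v \rightarrow v'$, and your one-UAV-per-column observation is the right (indeed more explicit than the paper's) justification that the forced within-copy communication edges are harmless. That part is fine.

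The \problemcoverage half has a genuine gap. You keep the gadget of Figure~\ref{figure:topologicgraphproblemcoverage}, with its one-step ascent $c_i \rightarrow s_i \rightarrow v_i$ and its one-step roaming edges between $v_1$ and every node of $G$, and you claim that every spurious connectivity route needs more than the $k$ available UAVs, so that lengthening the return path suffices. But the forced communication edges on the roaming moves create \emph{constant-length} routes to the base, independent of $k$: since $v_1 \communication x$ is forced for every node $x$ of the copy of $G$, and every column-$1$ tile already satisfies $x \communication \base$ inside $G$, four UAVs placed at $(\bottomtile{1},1)$, $v_1$, $c_k$ and $s_k$ form the connected chain $\base \communication (\bottomtile{1},1) \communication v_1 \communication c_k \communication s_k$ (the edges $c_k \communication v_1$ and $c_k \communication s_k$ being forced by the moves $v_1 \leftrightarrow c_k$ and $c_k \rightarrow s_k$). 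Four UAVs are easily choreographed, using $v_1$ as a hub and the entry nodes $e_1 \rightarrow (\bottomtile{1},1)$, so that this chain holds at the very first visit of $s_k$, with the remaining UAVs waiting at $\base$; the analogues of Facts~\ref{fact:nodesoutsideGprimeunvisited} and~\ref{fact:reachc1ck} then fail ($v_1$ is visited early with two UAVs, and $s_k$ is visited without the configuration $(s_1, \dots, s_k)$ ever occurring). The UAV at $s_k$ then parks on $v_k$'s self-loop, communicates with everything, and full coverage follows for \emph{every} instance, solvable or not — your reduction answers yes on unsolvable tilings. Your long shared return path neutralizes one family of shortcuts but not the ascent and roaming ones, which is precisely why the paper abandons this gadget: in Figure~\ref{figure:topologicgraphproblemcoveragefixedforneighborcommunicable} the ascent from $(c_1, \dots, c_k)$ to $(v_1, \dots, v_k)$ is replaced by two stacked grids of $k+1$ rows each, in the first of which only the leftmost node communicates with $\base$ and in the second only the rightmost, and $v_1$'s access to the copy of $G$ is itself routed through long paths. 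This makes every unintended relay chain require more than $k$ occupied nodes and forces the $k$ UAVs to cross in lockstep, so that $(v_1, \dots, v_k)$ — and hence $(c_1, \dots, c_k)$ exactly $2k+2$ steps earlier — is genuinely realized (Fact~\ref{fact:nodesv1vk}).
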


\begin{proof}
	For \problemreachability, the proof is similar to the proof of Theorem \ref{theorem:reachabilitypspacehard}: we just slightly modify graph $G$ of Figure \ref{reduction} as follows.  In order to prevent a UAV at $bot_2,\dots,bot_k$ to communicate directly with $B$, we add an intermediate node $m_i$ between $B$ and each $bot_i$ ($B \rightarrow bot_i$ becomes $B \rightarrow m_i \rightarrow bot_i$) for $i = 1, \dots, k$. We also add a communication edge $v \communication v'$ whenever $v \rightarrow v'$. In rest of the proof, the new graph is still noted $G$. 
	
	For \problemcoverage, the construction given in Figure~\ref{figure:topologicgraphproblemcoverage} with the new graph $G$ does not work. Indeed, all nodes may be visited  although $c_1, \dots, c_k$ was not reached: maybe $v_1$ and $v_k$ are reached by two lines of UAVs connected to the base, making the coverage of the full graph possible.

		\begin{figure}[t]
		\begin{center}
			\newcommand{\xretourgauche}{-1}
			\tikzstyle{circlenode} =  [circle, draw=black, fill=black, inner sep=0.8mm]
			\newcommand{\ligne}[1]{
				\node[circlenode]  (s1) at (0,#1)  {};
				\node[circlenode]  (s1) at (0.3,#1)  {};
				\node[circlenode]  (s1) at (1,#1)  {};
				\node[circlenode] (s2) at (3,#1)   {};
				\node[circlenode]  (sk) at (7,#1)  {};
				\draw[->] (1, #1-0.5) -- (s1);
				\draw[->] (3, #1-0.5) -- (s2);
				\draw[->] (7, #1-0.5) -- (sk);
				\draw[communication] (s1) -- (s2);
				\draw[communication] (s2) -- (4.5,#1);
				\draw[dotted,color=OliveGreen] (4.5,#1) -- (5.5,#1);
				\draw[communication] (sk) -- (5.5,#1);}
			
			\newcommand{\lignevdots}[1]{
				\node[]  (s1) at (1,#1)  {$\vdots$};
				\node[] (s2) at (3,#1)   {$\vdots$};
				\node[]  (sk) at (7,#1)  {$\vdots$};
				\draw[->] (1, #1-0.7) -- (s1);
				\draw[->] (3, #1-0.7) -- (s2);
				\draw[->] (7, #1-0.7) -- (sk);
			}
			
			\newcommand{\vnodeY}{8.7}
			\newcommand{\posBaseY}{2.75}
			\scalebox{1}
			{	\begin{tikzpicture}[xscale=0.75,yscale=0.9]
				\draw[copyofG] (-0.2,\posBaseY-0.3) rectangle (8.2,4);
				\draw (4,\posBaseY	) node[draw] (nb) {B};
				\draw (1,3.75) node (t1) {$c_1$};
				\draw (3,3.75) node (t2) {$c_2$};
				\draw (7,3.75) node (tk) {$c_k$};
				
				\draw[communication] (t1) -- (t2);
				\draw[communication] (t2) -- (4.5,3.75);
				\draw[dotted,color=OliveGreen] (4.5,3.75) -- (5.5,3.75);
				\draw[communication] (tk) -- (5.5,3.75);

				\ligne {4.5}
				\addCommunicationB {s1}
				\lignevdots {5.25}
				\ligne {5.9}
				\addCommunicationB {s1}
				
				\node at (7.5, 5.8) {$s_k$};
				
				\draw[decoration={brace,mirror,raise=5pt},decorate]
				(8,4.3) -- node[right=10pt] {$k+1$} (8, 6);

				\ligne {6.5}
				\addCommunicationB {sk}
				\lignevdots {7.25}
				\ligne {7.9}
				\addCommunicationB {sk}

				\draw[decoration={brace,mirror,raise=5pt},decorate]
				(8,6.3) -- node[right=10pt] {$k+1$} (8,8);

				\draw (1,\vnodeY) node (v1) {$v_1$};
				\draw (3,\vnodeY) node (v2) {$v_2$};
				\draw (7,\vnodeY) node (vk) {$v_k$};
				\draw (8,\vnodeY+0.5) node (all) {all};
				\draw[->] (s1) -- (v1);
				\draw[->] (s2) -- (v2);
				\draw[->] (sk) -- (vk);
				\draw[->, rounded corners = 8pt] (v1) -- (0.3, \vnodeY-0.3) -- (0.3,4);
				\draw[<-, rounded corners = 8pt] (v1) -- (0, \vnodeY) -- (0,4);
				
				\path (vk) edge [loop right] (vk);
				\draw[->] (nb) to [loop right] (nb);
				\draw [->, rounded corners = 8pt] (vk) -- (7,\vnodeY+0.5) -- (\xretourgauche,\vnodeY+0.5) -- (\xretourgauche, \posBaseY)  -- (nb);
				\draw (v1) -- (1,\vnodeY+0.5);
				\draw (v2) -- (3,\vnodeY+0.5);
				
				\draw[communication] (vk) -- (all);
				\node at (4, 3.25) {copy of $G$};
				\end{tikzpicture}}
			\caption{Topologic graph of the $\problemcoverage$-instance constructed from the $\problemreachability$-instance for the case of neighbor-communicable topologic graph.}
			\label{figure:topologicgraphproblemcoveragefixedforneighborcommunicable}
		\end{center}
	\end{figure}

	The corrected construction is given in Figure \ref{figure:topologicgraphproblemcoveragefixedforneighborcommunicable}. When configuration $(c_1,...,c_k)$ is reached, the UAVs go through a first layer of length $k+1$ in which the first UAV can communicate with $B$. Then they go through another layer of length $k+1$ in which the $k^{th}$ UAV can communicate with $B$. This way, it is mandatory that all UAVs move at the same time to visit $(v_1,\dots,v_k)$. Once the $k^{th}$ UAV is at $v_k$, all UAVs can communicate with $B$ wherever they are, so they can visit remaining states in the copy of $G$. Now let us prove that $(c_1, \dots, c_k)$ is reachable in $G$ iff it is possible to cover all nodes in $G'$.

	($\Rightarrow$) If $(c_1, \dots, c_k)$ is reachable in $G$, then we extend the execution to reach $(v_1, \dots, v_k)$ and by the same trick as in Figure \ref{figure:topologicgraphproblemcoverage}, the UAV that reaches $v_1$ visits all the remaining unvisited nodes in $G$. Thus, we extend the execution for covering all nodes in $G'$.
	
	$(\Leftarrow)$ Suppose all nodes are visited in $G'$.
	In particular, $v_1$ and $v_k$ are visited. Let us consider the first moment $t_{v_i}$ when a node $v_i$ is visited.
	
	\begin{fact}
		\label{fact:nodesv1vk}
		At that first moment, the configuration is $(v_1, \dots, v_k)$.
	\end{fact}
	
	\begin{proof}
		Let us prove that there is a UAV at $v_k$. Suppose that at that moment there is no UAV at $v_k$. Due to the topological graph $G'$, the UAV at $v_i$ is disconnected from the base since  nodes that communicate directly to $B$ are too far from $v_i$: indeed, the top $k+1$-grid is too long and, for $i=1$, the path on left between $v_1$ and the copy of $G$ is too long. Contradiction.
		
		The UAV at $v_k$ came from the unique $2k+2$-long path from $c_k$ to $v_k$.
		Actually, $k+1$ steps before - let us call this moment $t_{s_k}$, she was on $s_k$. But at that time, due to the topological graph, there are $k$ UAVs on the row containing $s_k$, otherwise the UAV at $s_k$ would have been disconnected from the base (the bottom $k+1$-grid is too long).
		
		So $k+1$ times later $t_{s_k}$, all the $k$ UAVs are at $(v_1, \dots, v_k)$.
	\end{proof}

	Taking Fact \ref{fact:nodesv1vk} as granted, we consider time $t$ that is $2k+2$ steps before and we clearly have the following fact.
	
	\begin{fact}
		At time $t$, the configuration is $(c_1, \dots, c_k)$.
	\end{fact}

	Moreover, the following fact holds.
	
	\begin{fact}
		At time $t$, no node outside $G$ were visited.
	\end{fact}
	
	\begin{proof}
		By contradiction, if some node outside $G$ were visited, it means that some UAV went out the copy of $G$. By definition of $G'$, it would mean that a node $v_i$ would have been visited, before time $t$, hence strictly before $t_{v_i}$. Contradiction.
	\end{proof}

To sum up, the prefix of the execution from $(B, \dots, B)$ to $(c_1, \dots, c_k)$ is fully inside the copy of $G$. So $(c_1, \dots, c_k)$ is reachable in $G$.
\end{proof}

\begin{theorem}
Both \problemboundedreachability and \problemboundedcoverage are NP-hard when restricted to neighbor-communicable topological graphs.
\end{theorem}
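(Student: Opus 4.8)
The plan is to reduce from the square tiling problem (NP-complete by Theorem~\ref{theorem:squaretilingproblemPSPACEcomplete}) for both problems, reusing the graphs of Figures~\ref{bounded-reduction} and~\ref{figure:topologicgraphproblemcoveragefixedforneighborcommunicable} and enforcing neighbor-communicability by the same device as in Theorem~\ref{theorem:reachabilitycoveragePSPACEhardneighbour-communicable}, namely adding a communication edge $v \communication v'$ for every move edge $v \rightarrow v'$. For \problemboundedreachability I would start from the graph $G$ of Figure~\ref{bounded-reduction}, keep the $k+2$ UAVs and the bound $k+2$, and simply add $v \communication v'$ whenever $v \rightarrow v'$. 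The crucial structural remark is that, apart from the edges leaving $\base$, every move of $G$ stays within a single column (an ellipse column, or the left or right lane), and no edge ever points back to $\base$. Since there are exactly $k+2$ bottom entry nodes $(0,0),\dots,(0,k+1)$, each able to host at most one UAV, and since a successful length-$(k+2)$ execution forbids any UAV from lingering at $\base$, the UAVs are forced into a one-per-column partition that persists throughout the execution. Hence every communication edge newly created inside a column never joins two distinct UAVs, and the only cross-column edge it creates, $\base \communication (0,i)$, is live only at the first step, where connectivity already holds through the unconditional bottom-row chain. I would then argue that connectivity at each intermediate row is still governed solely by the color-dependent edges (left boundary, horizontal matching, right boundary), so the analysis of Theorem~\ref{theorem:boundedreachbilityisnphard} carries over. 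In contrast with the corridor construction, no intermediate nodes $m_i$ are required here, because the bottom cells are dedicated color nodes and no horizontal constraint is imposed at the bottom step.

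For \problemboundedcoverage I would graft the neighbor-communicable coverage gadget of Figure~\ref{figure:topologicgraphproblemcoveragefixedforneighborcommunicable} onto the graph $G$ just obtained, with $k+2$ UAVs, goal configuration $c=((k+1,0),\dots,(k+1,k+1))$, and again $v\communication v'$ for each move. The two forcing layers now have length $(k+2)+1$, and I would take the bound $\ell$ to be the sum of $k+2$ (reaching $c$), the $O(k)$ steps crossing the two layers up to $(v_1,\dots,v_{k+2})$, and $2|G|+1$ steps for the UAV parked at $v_1$ to visit every remaining node of the copy of $G$ while the UAV at $v_{k+2}$ relays communication to $\base$, plus the steps of the return path. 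This $\ell$ is polynomial and is given in unary, so the instance is admissible and the reduction is polynomial. Correctness would follow the lines of Theorem~\ref{theorem:reachabilitycoveragePSPACEhardneighbour-communicable}: forward, a reaching execution is extended by the tour and the return; backward, the same reasoning as Fact~\ref{fact:nodesv1vk} shows that covering $G'$ forces the $k+2$ UAVs to arrive simultaneously at the $v$-nodes, which is possible only after reaching $c$, hence only if a square tiling exists. The length bound is harmless here since every execution inside $G$ has length exactly $k+2$.

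The step I expect to be delicate is the length bookkeeping in the coverage reduction: I must confirm that the two layers remain long enough to kill the ``two independent lines of UAVs'' shortcut that forced the replacement of Figure~\ref{figure:topologicgraphproblemcoverage}, that the freshly added within-column and $v_1$-to-$G$ communication edges do not let any UAV reach a $v_i$ while connected to $\base$ before the whole team is synchronized, and that all of this still fits under a polynomially bounded $\ell$ so that membership in NP is preserved. This is precisely the subtle core of the neighbor-communicable coverage argument, now to be re-verified under the explicit length bound.
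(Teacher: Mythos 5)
Your overall route is the paper's: reduce from the square tiling problem via the graph of Figure~\ref{bounded-reduction}, enforce neighbor-communicability by adding $v \communication v'$ for every $v \rightarrow v'$, and lift to \problemboundedcoverage with the gadget of Figure~\ref{figure:topologicgraphproblemcoveragefixedforneighborcommunicable}. For \problemboundedreachability your deviation---dropping the intermediate nodes $m_i$ and keeping the bound $k+2$, where the paper inserts $m_i$ between $\base$ and the bottom row and uses $k+3$---is in fact sound: since every node of the goal configuration is at move-distance exactly $k+2$ from $\base$ and no move returns to $\base$, a length-$(k+2)$ execution leaves no slack for waiting at $\base$ or stalling on a tile with $up(t)=down(t)$, so lockstep one-per-column progression is forced and the new edges $\base \communication (0,i)$ are live only at the first step, as you observe. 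This is a genuinely tighter observation than the paper's uniform treatment.

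The gap is in the coverage half, and it is caused precisely by that simplification. You assert that ``the length bound is harmless here since every execution inside $G$ has length exactly $k+2$.'' That is false once $G$ sits inside the coverage instance: the overall bound $\ell$ must be large enough to permit the tour of the copy of $G$, so in the prefix a UAV may loiter at $\base$ (the self-loop $\base \rightarrow \base$ belongs to $G$) and may stall on any tile with $up(t)=down(t)$; the tightness that forced synchronization in the pure reachability instance is gone. Because you removed the $m_i$, the added edges $\base \communication (0,i)$ make a straggler sitting at a bottom node base-connected on its own, so desynchronized prefixes---different columns at different conceptual rows, with only accidental horizontal color matches being checked---are legal executions that can still arrive simultaneously at the top row (stalls compensating late departures). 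Your Fact~\ref{fact:nodesv1vk}-style argument yields simultaneity at $(v_1,\dots,v_{k+2})$ and hence at $c$, but not synchronization of the prefix, so the extraction of a valid $k\times k$ tiling from a covering execution is not established. This is exactly what the paper's intermediate nodes buy in the coverage setting: with them a straggler at $(0,j)$, $j \geq 1$, has no direct communication to $\base$ and is disconnected unless the bottom chain is occupied, which is what allows the analysis of the PSPACE case to be reused. (Separately, your budget of $2|G|+1$ steps for the tour undercounts the long access paths between $v_1$ and the copy of $G$---the paper charges on the order of $2(k+2)\cdot|G|$---but since $\ell$ only needs to be polynomial and is given in unary, that is a repairable bookkeeping slip rather than a structural flaw.)
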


\begin{proof}
	For \problemboundedreachability, we use the same construction depicted in Figure \ref{bounded-reduction} except that we add intermediate nodes (as in the proof of Theorem \ref{theorem:reachabilitycoveragePSPACEhardneighbour-communicable}) between $B$ and $lef_{bot}, bot_1, \dots, bot_k, rig_{bot}$ and edges $v \communication v'$ are added whenever $v \rightarrow v'$. By still calling $G$ the obtained graph, the $\problemboundedreachability$-instance is $(G, c, k+3)$, where $c = ((k+1, 0), \dots, (k+1, k+1))$. The bound is now $k+3$ instead of $k+2$ because of the intermediate nodes.

	 For \problemboundedcoverage, we use the same idea that in Theorem~\ref{theorem:boundedreachbilityisnphard} but the construction given in Figure \ref{figure:topologicgraphproblemcoveragefixedforneighborcommunicable}.  The bound $\ell$ is the sum of $k+3$ (the number of steps for reaching $c$ in $G$), $2\times(k+2)$ (the number of steps to reach the configuration $(v_1, \dots, v_k)$) and $2 \times (k+2) \times |G| + 1$ (the number of steps for the first UAV finishing the visit of all remaining unvisited nodes; $2 \times (2k+2)$ corresponds to the number of steps in the two left-most paths in Figure~\ref{figure:topologicgraphproblemcoveragefixedforneighborcommunicable} for the back and forth between $v_1$ and nodes of $G$).
\end{proof}

\section{Related work}
\label{section:relatedwork}
As shown in the survey by Chen et al. \cite{ChenSurvey}, many coverage problems have been addressed by using analytic techniques. For instance, in \cite{DBLP:conf/icc/Yanmaz12} and \cite{teacy2010maintaining}, they also address UAVs that should cover an area while staying connected to the base, but solve this problem with specific path planning algorithms. The algorithms they provide are not proven formally but tested experimentally.

That is why we advocate for formal methods, that have already been applied to generate plans for robots and UAVs. For instance, model checking has been applied to robot planning (see \cite{DBLP:conf/iros/LacerdaPH14}) and to UAVs. Humphrey \cite{Humphrey2013} shows how to use LTL (linear-temporal logic) model checking for capturing response and fairness properties in cooperation (for instance, if a task is requested then it is eventually performed). Model checking has also been used to verify pre-programmed UAVs \cite{webster2011formal}.

Bodin et al. \cite{IJCAI2018demodrones} treat a similar problem except that the UAVs cover the graph without returning to the base. If we remove the return to the base constraint, we claim that all our complexity results still hold. They provide an implementation by describing the problem in PDDL (Planning Domain Description Language) and then run the planner FS  (Functional Strips) \cite{DBLP:conf/ijcai/FrancesRLG17}. Both \problemreachability and \problemcoverage may be expressed in MA-STRIPS \cite{DBLP:conf/aips/BrafmanD08}, that is a multi-agent variant of STRIPS (Stanford Research Institute Problem Solver) in which actions for each agent can be described independently. The representation in multi-agent planning languages is especially efficient when actions of the different agents are independent and when they required to coordinate not so often. However, as the agents should maintain connection, it requires a lot of coordination.

Interestingly Murano et al. \cite{DBLP:conf/prima/MuranoPR15} advocate for a graph-theoretic representations of states, that is, by giving locations to agents as we do in Definition~\ref{definition:configuration}. Aminof et al (\cite{DBLP:conf/atal/AminofMRZ16,DBLP:conf/atal/Rubin15}) propose a very general formalism to specify LTL and MSO (monadic second-order logic) properties which is expressive enough to express connectivity between agents with an MSO formula. Indeed, linear temporal operators enable to express that any vertex should be visited in the future and the connectivity invariant. MSO on the topological graph enables to express the connectivity as a fix point (the subgraph made up of the UAVs and the base is connected).
%
%
They provide an algorithm for parametrized verification in the sense that they check a temporal property in a class of graphs. This is relevant for partially-known environments. The algorithm described is non elementary and therefore not usable in practice. We nevertheless claim that studying fragments of it is relevant, and our paper seems to be a relevant fragment.

%
%
%
%

%
%
%
%

\section{Conclusion}
\label{section:conclusion}
On the theoretical side,  we introduced the multi-agent planning problems decision problems - namely \problemboundedcoverage, \problemboundedreachability, \problemcoverage, \problemreachability - that could become standard problems for proving that other multi-agent decision problems are NP-hard or PSPACE-hard. In some sense, this paper could be the starting point of a theory of multi-agent problems in complexity theory as constraint logic \cite{DBLP:conf/coco/DemaineH08} is for games.

Up to now, it is unknown whether our decision problems remain hard when the $\rightarrow$-relations become symmetric. We think this open issue is important since symmetric {$\rightarrow$-relations} (if UAVs can go from $v$ to $v'$, they can also come back from $v'$ to $v$) are relevant for practical applications. We also plan to study the \emph{parametrized complexity} \cite{DBLP:series/mcs/DowneyF99} of our problems - parameters could be the treewidth of the topological graph, the number of UAVs.

Interestingly, we plan to generalize to decentralized versions of our problems and to dynamic environments.  Instead of generating sequences of actions, we will have to generate strategies as in ATL (alternating-time temporal logic) \cite{DBLP:journals/corr/abs-1006-1414}. As UAVs stay connected, we may suppose that when information is gained, it is common knowledge and that all actions, especially sensing actions, are public \cite{DBLP:conf/atal/BelardinelliLMR17}. We also aim at using a high-level dedicated formal logic to express objectives, such as the language proposed in \cite{DBLP:conf/atal/Rubin15} and \cite{DBLP:conf/atal/AminofMRZ16}.

\paragraph{Acknowledgment. } This work was essentially done during the summer of~2017 with the internship of Eva Soulier. We thank François Bodin for all the discussions for the development of the definition of the topological graph presented in this document.  We thank Sophie Pinchinat and Ocan Sankur for their comments.







\bibliographystyle{alpha}
\bibliography{biblio}

\newcommand{\etalchar}[1]{$^{#1}$}
\begin{thebibliography}{TNMP10}

\bibitem[AMRZ16]{DBLP:conf/atal/AminofMRZ16}
Benjamin Aminof, Aniello Murano, Sasha Rubin, and Florian Zuleger.
\newblock Automatic verification of multi-agent systems in parameterised
  grid-environments.
\newblock In {\em Proceedings of the 2016 International Conference on
  Autonomous Agents {\&} Multiagent Systems, Singapore, May 9-13, 2016}, pages
  1190--1199, 2016.

\bibitem[BCQS18]{IJCAI2018demodrones}
François Bodin, Tristan Charrier, Arthur Queffelec, and Fran\c{c}ois
  Schwarzentruber.
\newblock Generating plans for cooperative connected uavs (demo).
\newblock In {\em Proceedings of the 27th International Joint Conference on
  Artificial Intelligence (IJCAI) and the 23rd European Conference on
  Artificial Intelligence (ECAI), Stockholm, 13-19 July 2018}, 2018.

\bibitem[BD08]{DBLP:conf/aips/BrafmanD08}
Ronen~I. Brafman and Carmel Domshlak.
\newblock From one to many: Planning for loosely coupled multi-agent systems.
\newblock In {\em Proceedings of the Eighteenth International Conference on
  Automated Planning and Scheduling, {ICAPS} 2008, Sydney, Australia, September
  14-18, 2008}, pages 28--35, 2008.

\bibitem[BLMR17]{DBLP:conf/atal/BelardinelliLMR17}
Francesco Belardinelli, Alessio Lomuscio, Aniello Murano, and Sasha Rubin.
\newblock Verification of multi-agent systems with imperfect information and
  public actions.
\newblock In {\em Proceedings of the 16th Conference on Autonomous Agents and
  MultiAgent Systems, {AAMAS} 2017, S{\~{a}}o Paulo, Brazil, May 8-12, 2017},
  pages 1268--1276, 2017.

\bibitem[Boa97]{Boas97theconvenience}
Peter Van~Emde Boas.
\newblock The convenience of tilings.
\newblock In {\em In Complexity, Logic, and Recursion Theory}, pages 331--363.
  Marcel Dekker Inc, 1997.

\bibitem[Byl94]{DBLP:journals/ai/Bylander94}
Tom Bylander.
\newblock The computational complexity of propositional {STRIPS} planning.
\newblock {\em Artif. Intell.}, 69(1-2):165--204, 1994.

\bibitem[CZX14]{ChenSurvey}
Y.~Chen, H.~Zhang, and M.~Xu.
\newblock The coverage problem in uav network: A survey.
\newblock In {\em Fifth International Conference on Computing, Communications
  and Networking Technologies (ICCCNT)}, pages 1--5, July 2014.

\bibitem[DEG10]{DBLP:journals/corr/abs-1006-1414}
Catalin Dima, Constantin Enea, and Dimitar~P. Guelev.
\newblock Model-checking an alternating-time temporal logic with knowledge,
  imperfect information, perfect recall and communicating coalitions.
\newblock In {\em Proceedings First Symposium on Games, Automata, Logic, and
  Formal Verification, {GANDALF} 2010, Minori (Amalfi Coast), Italy, 17-18th
  June 2010.}, pages 103--117, 2010.

\bibitem[DF99]{DBLP:series/mcs/DowneyF99}
Rodney~G. Downey and Michael~R. Fellows.
\newblock {\em Parameterized Complexity}.
\newblock Monographs in Computer Science. Springer, 1999.

\bibitem[DH08]{DBLP:conf/coco/DemaineH08}
Erik~D. Demaine and Robert~A. Hearn.
\newblock Constraint logic: {A} uniform framework for modeling computation as
  games.
\newblock In {\em Proceedings of the 23rd Annual {IEEE} Conference on
  Computational Complexity, {CCC} 2008, 23-26 June 2008, College Park,
  Maryland, {USA}}, pages 149--162, 2008.

\bibitem[FKP05]{DBLP:conf/icra/FainekosKP05}
Georgios~E. Fainekos, Hadas Kress{-}Gazit, and George~J. Pappas.
\newblock Temporal logic motion planning for mobile robots.
\newblock In {\em Proceedings of the 2005 {IEEE} International Conference on
  Robotics and Automation, {ICRA} 2005, April 18-22, 2005, Barcelona, Spain},
  pages 2020--2025, 2005.

\bibitem[FRLG17]{DBLP:conf/ijcai/FrancesRLG17}
Guillem Franc{\`{e}}s, Miquel Ram{\'{\i}}rez, Nir Lipovetzky, and Hector
  Geffner.
\newblock Purely declarative action descriptions are overrated: Classical
  planning with simulators.
\newblock In {\em Proceedings of the Twenty-Sixth International Joint
  Conference on Artificial Intelligence, {IJCAI} 2017, Melbourne, Australia,
  August 19-25, 2017}, pages 4294--4301, 2017.

\bibitem[HTK00]{Harel:2000:DL:557365}
David Harel, Jerzy Tiuryn, and Dexter Kozen.
\newblock {\em Dynamic Logic}.
\newblock MIT Press, Cambridge, MA, USA, 2000.

\bibitem[Hum13]{Humphrey2013}
Laura~R. Humphrey.
\newblock {\em Model Checking for Verification in UAV Cooperative Control
  Applications}, pages 69--117.
\newblock Springer Berlin Heidelberg, Berlin, Heidelberg, 2013.

\bibitem[KGLR18]{doi:10.1146/annurev-control-060117-104838}
Hadas Kress-Gazit, Morteza Lahijanian, and Vasumathi Raman.
\newblock Synthesis for robots: Guarantees and feedback for robot behavior.
\newblock {\em Annual Review of Control, Robotics, and Autonomous Systems},
  1(1):null, 2018.

\bibitem[Lev73]{levin1973}
L.~A. Levin.
\newblock Universal sequential search problems.
\newblock {\em Problems of Information Transmission}, 9(3):265--266, 1973.

\bibitem[LP98]{DBLP:books/daglib/0096996}
Harry~R. Lewis and Christos~H. Papadimitriou.
\newblock {\em Elements of the theory of computation, 2nd Edition}.
\newblock Prentice Hall, 1998.

\bibitem[LPH14]{DBLP:conf/iros/LacerdaPH14}
Bruno Lacerda, David Parker, and Nick Hawes.
\newblock Optimal and dynamic planning for markov decision processes with
  co-safe {LTL} specifications.
\newblock In {\em 2014 {IEEE/RSJ} International Conference on Intelligent
  Robots and Systems, Chicago, IL, USA, September 14-18, 2014}, pages
  1511--1516, 2014.

\bibitem[MPR15]{DBLP:conf/prima/MuranoPR15}
Aniello Murano, Giuseppe Perelli, and Sasha Rubin.
\newblock Multi-agent path planning in known dynamic environments.
\newblock In {\em {PRIMA} 2015: Principles and Practice of Multi-Agent Systems
  - 18th International Conference, Bertinoro, Italy, October 26-30, 2015,
  Proceedings}, pages 218--231, 2015.

\bibitem[MTS{\etalchar{+}}16]{DBLP:conf/aaai/MaTSKK16}
Hang Ma, Craig~A. Tovey, Guni Sharon, T.~K.~Satish Kumar, and Sven Koenig.
\newblock Multi-agent path finding with payload transfers and the
  package-exchange robot-routing problem.
\newblock In {\em Proceedings of the Thirtieth {AAAI} Conference on Artificial
  Intelligence, February 12-17, 2016, Phoenix, Arizona, {USA.}}, pages
  3166--3173, 2016.

\bibitem[Rub15]{DBLP:conf/atal/Rubin15}
Sasha Rubin.
\newblock Parameterised verification of autonomous mobile-agents in static but
  unknown environments.
\newblock In {\em Proceedings of the 2015 International Conference on
  Autonomous Agents and Multiagent Systems, {AAMAS} 2015, Istanbul, Turkey, May
  4-8, 2015}, pages 199--208, 2015.

\bibitem[Sav70]{DBLP:journals/jcss/Savitch70}
Walter~J. Savitch.
\newblock Relationships between nondeterministic and deterministic tape
  complexities.
\newblock {\em J. Comput. Syst. Sci.}, 4(2):177--192, 1970.

\bibitem[SvEB]{van1984boundedtiling}
M.W.P. Savelsbergh and Peter van Emde~Boas.
\newblock Bounded tiling, an alternative to satisfiability.
\newblock In {\em G. Wechsung (ed.), proc. 2nd Frege Memorial Conference,
  Schwerin, Sep 1984, Akademie Verlag, Mathematische Forschung vol. 20, 1984,
  pp. 401-407.}

\bibitem[TBR{\etalchar{+}}18]{DBLP:conf/aaai/TateoBRAB18}
Davide Tateo, Jacopo Banfi, Alessandro Riva, Francesco Amigoni, and Andrea
  Bonarini.
\newblock Multiagent connected path planning: Pspace-completeness and how to
  deal with it.
\newblock In {\em Proceedings of the Thirty-Second {AAAI} Conference on
  Artificial Intelligence, New Orleans, Louisiana, USA, February 2-7, 2018},
  2018.

\bibitem[TNMP10]{teacy2010maintaining}
WT~Luke Teacy, Jing Nie, Sally McClean, and Gerard Parr.
\newblock Maintaining connectivity in uav swarm sensing.
\newblock In {\em GLOBECOM Workshops (GC Wkshps), 2010 IEEE}, pages 1771--1776.
  IEEE, 2010.

\bibitem[Tur02]{DBLP:conf/jelia/Turner02}
Hudson Turner.
\newblock Polynomial-length planning spans the polynomial hierarchy.
\newblock In {\em Logics in Artificial Intelligence, European Conference,
  {JELIA} 2002, Cosenza, Italy, September, 23-26, Proceedings}, pages 111--124,
  2002.

\bibitem[Wan61]{Wang1961}
Hao Wang.
\newblock {\em Proving Theorems by Pattern Recognition, II}, pages 1--41.
\newblock AT\&T, 1961.

\bibitem[Wan90]{Wang1990}
Hao Wang.
\newblock {\em Proving Theorems by Pattern Recognition, II}, pages 159--192.
\newblock Springer Netherlands, Dordrecht, 1990.

\bibitem[WDM07]{DBLP:conf/aaai/WurmanDM07}
Peter~R. Wurman, Raffaello D'Andrea, and Mick Mountz.
\newblock Coordinating hundreds of cooperative, autonomous vehicles in
  warehouses.
\newblock In {\em Proceedings of the Twenty-Second {AAAI} Conference on
  Artificial Intelligence, July 22-26, 2007, Vancouver, British Columbia,
  Canada}, pages 1752--1760, 2007.

\bibitem[WFCJ11]{webster2011formal}
Matt Webster, Michael Fisher, Neil Cameron, and Mike Jump.
\newblock Formal methods for the certification of autonomous unmanned aircraft
  systems.
\newblock {\em Computer Safety, Reliability, and Security}, pages 228--242,
  2011.

\bibitem[Yan12]{DBLP:conf/icc/Yanmaz12}
Evsen Yanmaz.
\newblock Connectivity versus area coverage in unmanned aerial vehicle
  networks.
\newblock In {\em Proceedings of {IEEE} International Conference on
  Communications, {ICC} 2012, Ottawa, ON, Canada, June 10-15, 2012}, pages
  719--723, 2012.

\bibitem[YL13]{DBLP:conf/aaai/YuL13}
Jingjin Yu and Steven~M. LaValle.
\newblock Structure and intractability of optimal multi-robot path planning on
  graphs.
\newblock In {\em Proceedings of the Twenty-Seventh {AAAI} Conference on
  Artificial Intelligence, July 14-18, 2013, Bellevue, Washington, {USA.}},
  2013.

\end{thebibliography}

\end{document}